\documentclass[a4paper,onecolumn,accepted=2024-05-22]{quantumarticle}
\pdfoutput=1

\usepackage[utf8]{inputenc}
\usepackage[english]{babel}
\usepackage[T1]{fontenc}

\usepackage[numbers,sort&compress]{natbib}

\usepackage{mathtools}
\usepackage{amsmath}
\usepackage{amsthm}
\usepackage{physics}
\usepackage{amsfonts}
\usepackage{hyperref}
\usepackage{verbatim}

\usepackage{algorithm2e}

\usepackage{caption}
\usepackage{subcaption}
\usepackage{graphicx}
\usepackage{float}
\usepackage{thmtools} 
\usepackage{thm-restate}

\usepackage[svgnames,dvipsnames]{xcolor}

\usepackage{stmaryrd}
\usepackage{empheq}
\usepackage{amssymb}

\theoremstyle{plain}
\newtheorem{theorem}{Theorem}
\newtheorem{corollary}{Corollary}
\newtheorem{proposition}{Proposition}

\theoremstyle{definition}
\newtheorem{definition}{Definition}
\newtheorem{remark}{Remark}

\usepackage{apptools}
\AtAppendix{\counterwithin{proposition}{section}}

\renewcommand\H{\mathcal H}

\renewcommand\S{\mathcal S}

\newcommand\C{\mathcal{C}}

\newcommand\M{\mathcal M}
\newcommand\W{\mathcal W}

\usepackage{dsfont}
\newcommand{\id}{\mathds{1}}

\usepackage[normalem]{ulem}

\begin{document}

\title{Improving social welfare in non-cooperative games with different types of quantum resources}

\author{Alastair A.\ Abbott}
\orcid{0000-0002-2759-633X}
\affiliation{Univ.\ Grenoble Alpes, Inria, 38000 Grenoble, France}

\author{Mehdi Mhalla}
\orcid{0000-0003-4178-5396}
\affiliation{Univ.\ Grenoble Alpes, CNRS, Grenoble INP, LIG, 38000 Grenoble, France}

\author{Pierre Pocreau}
\affiliation{Univ.\ Grenoble Alpes, Inria, 38000 Grenoble, France}
\affiliation{Univ.\ Grenoble Alpes, CNRS, Grenoble INP, LIG, 38000 Grenoble, France}

\maketitle

\begin{abstract}
We investigate what quantum advantages can be obtained in multipartite non-cooperative games by studying how different types of quantum resources can lead to new Nash equilibria and improve social welfare -- a measure of the quality of an equilibrium.
Two different quantum settings are analysed: a first, in which players are given direct access to an entangled quantum state, and a second, which we introduce here, in which they are only given classical advice obtained from quantum devices.
For a given game $G$, these two settings give rise to different equilibria characterised by the sets of equilibrium correlations $Q_\textrm{corr}(G)$ and $Q(G)$, respectively. 
We show that  $Q(G)\subseteq Q_\textrm{corr}(G)$, and by exploiting the self-testing property of some correlations, that the inclusion is strict for some games $G$. 
We make use of SDP optimisation techniques to study how these quantum resources can improve social welfare, obtaining upper and lower bounds on the social welfare reachable in each setting.
We investigate, for several games involving conflicting interests, how the social welfare depends on the bias of the game and improve upon a separation that was previously obtained using pseudo-telepathic solutions. 
\end{abstract}

\section{Introduction}

In game theory, the concept of \emph{Nash equilibria}~\cite{Nash} is used to define stable solutions of non-cooperative games in which players may have conflicting interests. It has been extensively studied for its many applications, from economics~\cite{myerson_nash_1999} to the analysis of information warfare~\cite{warfare}. 
In these games, each player has an individual payoff function, rewarding them depending on their behaviour and that of the others. 
A Nash equilibrium is a situation where no player has an incentive to deviate from their strategy and represents a strategically stable behaviour~\cite{gameeco}. 
One way to evaluate an equilibrium is to consider the mean (or total) payoff of the players, which is often referred to as its \emph{social welfare}.

A specific subset of such games in which the players' interests are aligned are commonly encountered in quantum information theory in the form of nonlocal games~\cite{brunner_bell_2014}. In such games, players sharing an entangled quantum state can achieve a higher probability of winning than if they shared any classical random variable.

\begin{figure}[ht]
    \centering
    \begin{tabular}{ccc}
        \includegraphics[width=0.3\textwidth]{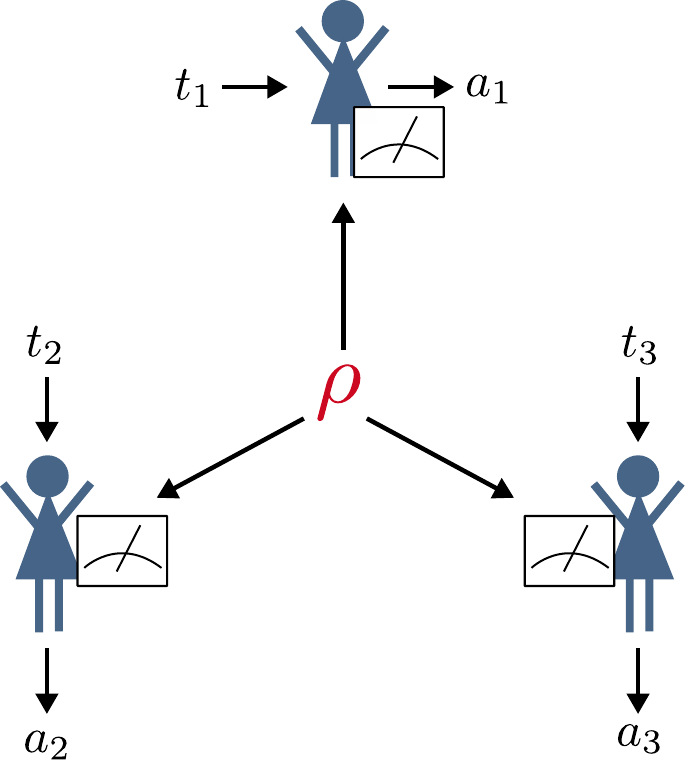} 
        & \hspace{0.1\textwidth} & \includegraphics[width=0.45\textwidth]{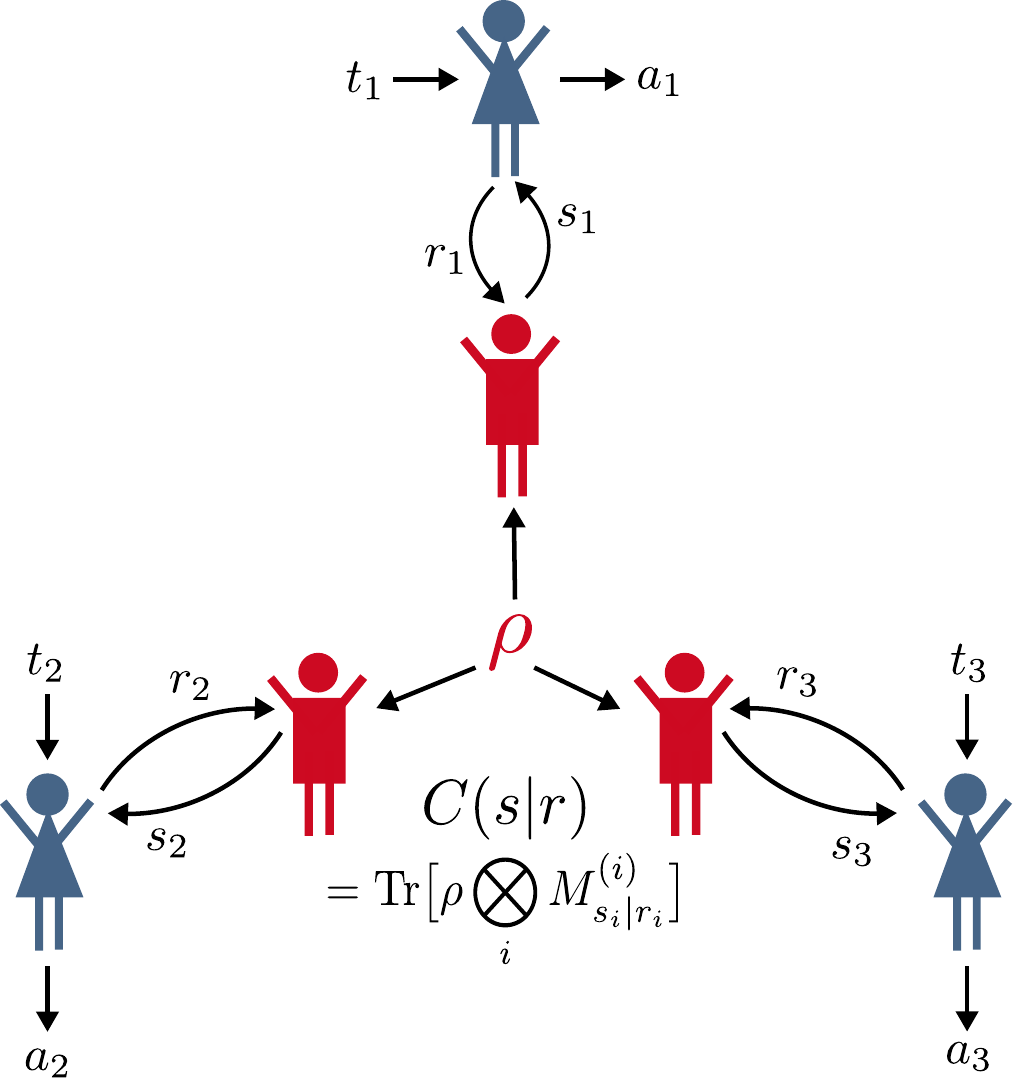} 
       \\
       (a) & \hspace{0.02\textwidth} & (b) 
    \end{tabular}
    \caption{The two quantum scenarios we consider. (a) ``Quantum advice'': direct access to a shared quantum state. (b) ``Quantum correlated advice'': classical access to a quantum correlation $C(s|r)$ via several local mediators sharing an entangled quantum state.}
    \label{fig:scenarios}
\end{figure}

In non-cooperative multipartite games with conflicting interest (involving more than two players) and with access to shared quantum states as resources~\cite{auletta_belief-invariant_2016}, the phenomena of entanglement and nonlocality are somewhat more complex~\cite{horodecki09,brunner_bell_2014}.
In this scenario, depicted in Figure~\ref{fig:scenarios}(a), it was recently shown that there exist three-player games where nonlocal resources
again provide advantages~\cite{bolonek-lason_three-player_2017}. 
In recent work, it was proved that by using pseudo-telepathic solutions on graph games and by allowing for multiple repetitions, it is possible to create an unbounded separation between classical and quantum social welfare~\cite{groisman_how_2020}. 

However, shared quantum states are just an example of an extra resource or \emph{advice} that can be distributed amongst the players for them to achieve better social welfare. 
In the study of non-cooperative games, several other types of classical resources have also historically been studied, including settings with shared random variables~\cite{AUMANN197467} or other types of correlated advice provided by a trusted mediator, such as belief-invariant (or non-signalling) advice~\cite{forges06,lehrer10}, or advice that allows information exchange between the players~\cite{auletta_belief-invariant_2016}.
Importantly, a desirable property of both shared random variables and shared quantum states is that they can be used to establish correlations amongst the actions of the players without the need for any trusted mediator. Because shared quantum states are more powerful than shared classical random variables, they allow the realisation of stronger belief-invariant correlations without the need for a mediator, motivating the use of quantum resources in these scenarios.

Here we introduce a new type of quantum resource, in the form of \emph{quantum correlated} advice, which gives the players classical access to a quantum correlation either through some black-box measuring devices shared amongst them (thus without the need of a centralised mediator), or indirectly, via one or several trusted mediators, who prepare and measure a distributed quantum state, as represented in the scenario of Figure~\ref{fig:scenarios}(b). 

This scenario, where players are given classical access to quantum resources, arises rather naturally. Indeed, it is likely that in the near future access to quantum resources will be restricted to those with sufficient means, such as large corporations. Players may therefore only have access to quantum states via local ``quantum centres'' (or mediators), which could share entangled states with other centres in a quantum network. In this paper we show that in conflicting-interest games, this setting (which is easier to implement) can in fact be more powerful than the situation where players have full quantum access (as studied in~\cite{auletta_belief-invariant_2016}), as it leads to more and potentially better equilibria.

More formally, for a game $G$ we define the sets of equilibrium correlations, $Q_{\text{corr}}(G)$ and $Q(G)$, which fully characterise the equilibria accessible when sharing respectively quantum correlated advice or a quantum state. We first show that for any game, $Q(G) \subseteq Q_{\text{corr}}(G)$. Then, to show that the two classes are, in general, strictly different, we exploit a property of some quantum correlations known as self-testing~\cite{Supic2020selftestingof}. This allows us, for a particular game of interest, to show that all strategies leading to some particular correlations are essentially equivalent up to local transformations, thereby reducing the problem of separating $Q(G)$ and $Q_{\text{corr}}(G)$ to the feasible task of showing that a particular solution -- inducing correlations in $Q_{\text{corr}}(G)$ -- is not a quantum equilibrium, proving that for some game $G$, $Q(G) \subsetneq Q_{\text{corr}}(G)$.

Finally, we compare the quality (in terms of social welfare) of the equilibria of these two quantum settings, using techniques from semidefinite programming to place upper and lower bounds on the achievable social welfare. 
Our results suggest, perhaps counterintuitively, that players can reach better equilibria when they do not have the freedom that comes with each having full access to their own quantum device. 
On the way, we show that relaxing the pseudo-telepathic constraints in some games allows the players to obtain higher social welfare, meaning that they can win more if they accept to lose in some rounds of the game.

The paper is organised as follows. 
In Section~\ref{sec:NC-eq} we recall the definition of non-cooperative multipartite binary games as well as the notions of equilibria and correlation resources. 
In Section~\ref{sec:q_strat} we introduce quantum equilibria as described by~\cite{auletta_belief-invariant_2016}, corresponding to situations where players share an entangled quantum state, before introducing the new class of quantum correlated advice we study in Section~\ref{sec:q_corr}. 
In Section~\ref{sec:inclusion} we study the relationship between these settings, and show how self-testing can be used to show a strict separation between them.
Finally, in Section~\ref{sec:social_welfare} we compare in detail three families of non-cooperative games, investigating the social welfare attainable in different settings.

\section{Non-cooperative games and quantum equilibria}
\label{sec:NC-eq}

\subsection{Non-cooperative multipartite games}

In an $n$-player multipartite game $G$, each player $i$ receives a type $t_i$ from a question $t = t_ 1\dots t_n$, and follows a strategy to produces an output $a_i$. The outputs are aggregated to form the answer $a = a_1\dots a_n$, and each player receives a payoff depending on the question and answer. 
More formally:
\begin{definition}
  An $n$ player non-cooperative binary game is defined by a set of valid questions $T \subseteq \{0, 1\}^n$, a prior probability distribution $\Pi$ over $T$ (satisfying $\forall t \in T$, $\Pi(t) \geq 0$ and $\sum_{t \in T} \Pi(t) = 1$), a set of valid answers  $A=A_1\times \cdots \times A_n = \{0, 1\}^n$, and an individual payoff function $u_i : A\times T \to \mathbb{R}$ for each player $i$.
\end{definition} 

Here, we will consider games where the payoff functions have the same form for all players and are defined from a set of ``winning'' input-output pairs $\W \subseteq A \times T$ as
\begin{equation} \label{eq:payoff}
    u_i(a, t) = \begin{cases} 0 & \text{ if } (a, t) \not\in \W, \\ 
    v_0 &\text{ if } a_i = 0 \text{ and } (a, t) \in \W, \\ 
    v_1 &\text{ if } a_i = 1 \text{ and } (a, t) \in \W, \end{cases}
\end{equation}
with $v_0,v_1>0$. 
Therefore, for a given question $t$, if two players output the same value they receive the same payoff.
It is the ratio $v_0/v_1$ that may create conflicts of interest between the players: if the payoff is unbalanced, e.g., if $v_1 > v_0$, the players might have an interest in losing more often if it means answering $1$ more often. The game thus becomes one of conflicting interests.

As an example, in Tables~\ref{table:NC5} and~\ref{table:NC5sym} we give the winning conditions of two (families of) $5$-player games defined in~\cite{groisman_how_2020}, and which we analyse in more detail in Section~\ref{sec:social_welfare}. 
These games are an adaption of multipartite binary graph games~\cite{anshu_contextuality_2020} which associate games to graphs and are based on the $C_5$ cycle graph; $\text{NC}_{01}(C_5)$ is a symmetrised version of $\text{NC}_{00}(C_5)$. 
Throughout this paper, we will always consider $\Pi$ to be the uniform distribution on $T$, so that the questions are drawn uniformly at random.

\begin{table}[ht]
\parbox{.45\linewidth}{
  \begin{tabular}{c | c}
    $\underset{t_1t_2t_3t_4t_5}{\text{Question}}$ & Winning condition \\
    \hline
    \hline
    10000 & $a_5 \oplus a_1 \oplus a_2 = 0$ \\
    01000 & $a_1 \oplus a_2 \oplus a_3 = 0$ \\
    00100 & $a_2 \oplus a_3 \oplus a_4 = 0$ \\
    00010 & $a_3 \oplus a_4 \oplus a_5 = 0$ \\
    00001 & $a_4 \oplus a_5 \oplus a_1 = 0$ \\
    11111 & $a_1 \oplus a_2 \oplus a_3 \oplus a_4 \oplus a_5 = 1$ \\
    \hline
    \end{tabular}
  \caption{Winning conditions for the $\text{NC}_{00}(C_5)$ family of games.}
  \label{table:NC5}
  }
  \hfill
  \parbox{.45\linewidth}{
  \begin{tabular}{c | c}
    $\underset{t_1t_2t_3t_4t_5}{\text{Question}}$ & Winning condition \\
    \hline
    \hline
    10100 & $a_5 \oplus a_1 \oplus a_2 = 0$ \\
    01010 & $a_1 \oplus a_2 \oplus a_3 = 0$ \\
    00101 & $a_2 \oplus a_3 \oplus a_4 = 0$ \\
    10010 & $a_3 \oplus a_4 \oplus a_5 = 0$ \\
    01001 & $a_4 \oplus a_5 \oplus a_1 = 0$ \\
    11111 & $a_1 \oplus a_2 \oplus a_3 \oplus a_4 \oplus a_5 = 1$ \\

    \hline
  \end{tabular}
    \caption{Winning conditions for the $\text{NC}_{01}(C_5)$ family of games.}
  \label{table:NC5sym}
  }
\end{table}

In a game, the strategy of a player influences their payoff but also that of other players. Naturally, each player seeks to maximise their own payoff. Some configurations of strategies are stable, meaning that no player can increase their payoff by unilaterally deviating from their strategy, and these configurations are called equilibria. Furthermore, when players are allowed to coordinate with some correlation or advice, they may reach different equilibria. In the next section formalise these concepts.

\subsection{Strategies and equilibria with correlation resources}

It is standard to consider a setting where the players can correlate their strategies via a shared correlation $C$ over two sets $R = R_1 \times \cdots \times R_n$ and $S = S_1 \times \cdots \times S_n$.
A standard interpretation, which we adopt here, is to see this correlation as that of a trusted mediator communicating with each party. 
Each party sends a message $r_i \in R_i$ to the mediator, who computes a correlation $C(s_1\dots s_n|r_1 \dots r_n)$ before sending back the advice $s_i \in S_i$ to each party (see Figure~\ref{fig:single_med}). 
Allowing the players to access, via the mediator, different types of correlations may give them different capabilities to obtain good equilibria. 

\begin{figure}[t]
    \centering
    \includegraphics[width=0.33\textwidth]{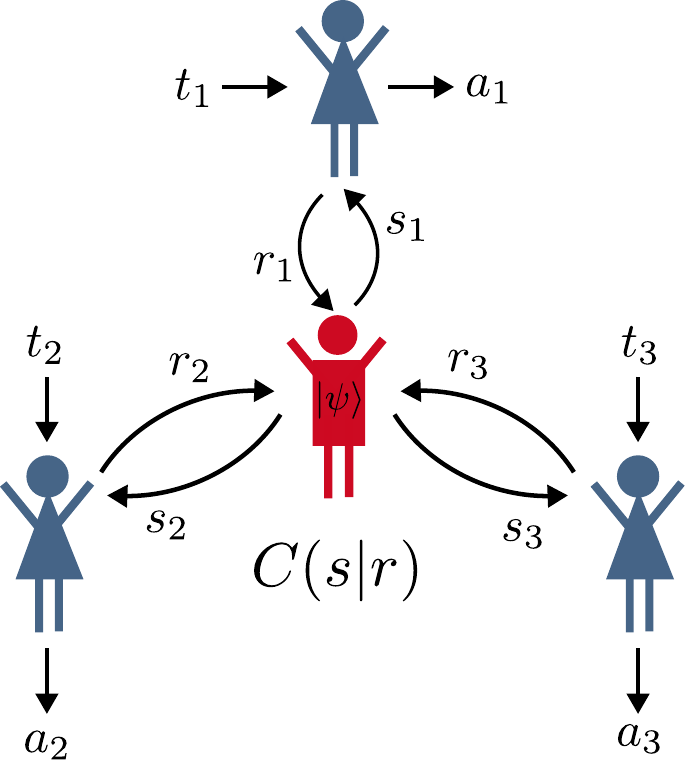}
    \caption{General scenario with correlation resources, in which players access a shared correlation $C(s|r)$ through communication with a trusted mediator.}
    \label{fig:single_med}
\end{figure}

We now formalise the notions of strategies and equilibria.
Throughout, for any string $t = t_1 \dots t_n$ we will use the shorthand notation that $t_{-i} = t_1\dots t_{i-1} t_{i+1} \dots t_n$ and $t_it_{-i} = t_{-i}t_i = t$. The simplest type of local strategy a player may use is a pure, or deterministic strategy.
\begin{definition}[Pure strategy]
  A pure strategy for a player $i$ is given by a pair of functions $f_i : T_i \to R_i$ and $g_i: T_i \times S_i \to A_i$, where $T_i = \{0, 1\}$, so that $T \subseteq T_1 \times \cdots \times T_n$. 
\end{definition}
More generally a player may use a probabilistic strategy obtained by mixing pure ones by marginalising over a set $\Lambda_i$ of a latent variables.
\begin{definition}[Mixed strategy]
  A mixed strategy for a player $i$ is given by a probability distribution $\pi_i : \Lambda_i \to \mathbb{R}$ and two functions $f_i:T_i\times \Lambda_i \to R_i$ and $g_i : T_i\times S_i \times \Lambda_i \to A_i$.
\end{definition}
We can now define solutions, which describe the strategies of each player as well as the shared correlations.
\begin{definition}[Solution]
  A solution is a tuple $(f, g, \pi, C)$ where $\pi = (\pi_1, \dots, \pi_n)$, $f = (f_1, \dots, f_n)$ and $g = (g_1, \dots, g_n)$ define the mixed strategies of each player $i$, and $C$ is the shared correlation. Then, a solution \emph{induces} a probability distribution on $A \times T$ as
  \begin{equation}\label{eq:induced_dist}
  	P(a|t) = \sum_{\lambda,s} C(s|f(t,\lambda))\,\pi(\lambda)\, \delta_{g(t,s,\lambda),a},
  \end{equation}
  where $\delta$ is the Kronecker delta.
  
  With a slight abuse of notation, we will use $(f, g, C)$ to denote the less general case of a \emph{pure solution}, with each player following a pure strategy. The expression of its induced probability distribution is then simplified to
  \begin{equation}
  	P(a|t) = \sum_{s} C(s|f(t)) \delta_{g(t,s),a}.
  \end{equation}
  Note that different solutions can induce the same probability distribution.
\end{definition}

A solution is a \emph{Nash equilibrium} if no player can improve their payoff by unilaterally deviating from their strategy. 
For a player $i$, a deviation is expressed by the use of a new mixed strategy. By linearity, however, it is sufficient to verify that for any fixed type $t_i$ the payoff decreases under any deviation to a pure strategy. That is, it is sufficient to check that it decreases for any alternative input $r_i$ sent by the player to the mediator, and any deterministic function $\mu_i : T_i \times S_i \to A_i$.

\begin{definition}[Nash equilibrium]
\label{def:NashEq} 
  A solution $(f, g, \pi, C)$ is a Nash equilibrium for a game $G$ if for all players $i$, for all $t_i$, $r_i$ and deterministic functions $\mu_i : T_i \times S_i \to A_i$:
  \begin{align}
  \label{ineq:nasheq}
  &\sum_{t_{-i}, \lambda, s} u_i(g(t_it_{-i}, s, \lambda),t_it_{-i})  C(s | f(t_it_{-i}, \lambda)) \Pi(t_it_{-i}) \pi(\lambda) \notag\\ 
  &\geq \sum_{t_{-i}, \lambda, s} u_i(\mu_i(t_i, s_i)g_{-i}(t_{-i}, s_{-i}, \lambda_{-i}),t_it_{-i}) C(s | r_i f_{-i}(t_{-i}, \lambda_{-i})) \Pi(t_it_{-i}) \pi(\lambda),
  \end{align}
  where we recall the convention that $t_i t_{-i} = t$. 
\end{definition}

The equilibria obtainable using correlated advice may depend on the characteristics of the correlation $C$. 
Thus, it is natural to distinguish different families of correlations, such as arbitrary correlations allowing players to exchange information about their types via the mediator, or local correlations, corresponding to a random variable shared between the players. 
These families and others will be introduced later. 
For a generic family of correlations $\C$, we introduce the notion of $\C$-Nash equilibrium.

\begin{definition}[$\C$-Nash equilibrium]
\label{def:CNash}
Let $\C$ be a family of correlations and $G$ a game. 
A $\C$-Nash equilibrium for $G$ is a solution $(f, g, \pi, C)$ that is a Nash equilibrium for $G$ and which has $C \in \C$.
\end{definition}

We furthermore introduce the concept of a canonical solution to simplify the manipulation of equilibria and ease calculations. 
The idea is to delegate all the ``computation'' of a strategy to the mediator, so that the players send directly their type to the mediator and simply output the advice they receive in return. Then, without loss of generality, the sets $T_i$ ($A_i$) and $R_i$ (resp.\ $S_i$) can be taken to be identical.

\begin{definition}[Canonical solution]
Let $(f,g, \pi, C)$ be a solution and $P$ its induced probability distribution on $A \times T$ as per Eq.~\eqref{eq:induced_dist}. Then, the pure solution $(id_{T},id_{A}, P)$ is called its canonical solution, where $id_{T}$ (resp.\ $id_{A})$ represents the identity function on $T$ (resp.\ $A$).
\end{definition}
By definition, a solution and its corresponding canonical solution both induce the same probability distribution $P$.
 
The conditions for a canonical solution to be a Nash equilibrium can be simplified using the facts that it is a pure solution and that we can identify each $T_i$ with $R_i$ and $A_i$ with $S_i$.

\begin{proposition}
\label{prop:canEq}
  A canonical solution $(id_{T}, id_{A}, P)$ is a Nash equilibrium for a game $G$, if and only if for all players $i$, all $t_i, r_i$, and all deterministic functions $\mu_i : T_i \times A_i \to A_i$:
\begin{equation}
    \sum_{t_{-i}, a} u_i(a,t_i t_{-i})  P(a | t_i t_{-i}) \Pi(t_i t_{-i}) \geq \sum_{t_{-i}, a} u_i(\mu_i(t_i, a_i)a_{-i},t_i t_{-i}) P(a | r_i t_{-i}) \Pi(t_i t_{-i}).
\end{equation}
\end{proposition}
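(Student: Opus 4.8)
The plan is to derive the stated inequality by substituting the defining data of the canonical solution directly into the general Nash-equilibrium condition~\eqref{ineq:nasheq} and simplifying. For the canonical solution $(id_{T}, id_{A}, P)$ the following hold: there are no latent variables, so the sums over $\lambda$ collapse and the factors $\pi(\lambda)$ become trivial; the shared correlation is $C = P$; the input functions satisfy $f_i = id_{T_i}$, so that $f(t) = t$ and hence $C(s \mid f(t)) = P(s \mid t)$; and the output functions satisfy $g_i = id_{A_i}$, so that $g(t, s) = s$ and $g_{-i}(t_{-i}, s_{-i}) = s_{-i}$. Moreover, the canonical form identifies $R_i$ with $T_i$ and $S_i$ with $A_i$, so the summation variable $s$ ranges over $A$ and can simply be renamed $a$ after substitution.

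First I would simplify the left-hand side of~\eqref{ineq:nasheq}. Dropping the $\lambda$-sum and substituting $g(t_i t_{-i}, s) = s$ together with $C(s \mid f(t_i t_{-i})) = P(s \mid t_i t_{-i})$ turns it into $\sum_{t_{-i}, a} u_i(a, t_i t_{-i}) P(a \mid t_i t_{-i}) \Pi(t_i t_{-i})$, which is exactly the left-hand side of the proposition.

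Next I would simplify the right-hand side, which encodes a unilateral deviation parametrised by an alternative input $r_i$ and a post-processing function $\mu_i$. Substituting $g_{-i}(t_{-i}, s_{-i}) = s_{-i}$ and $f_{-i}(t_{-i}) = t_{-i}$, the payoff argument $\mu_i(t_i, s_i) g_{-i}(t_{-i}, s_{-i})$ becomes $\mu_i(t_i, s_i) s_{-i}$ and the correlation becomes $P(s \mid r_i t_{-i})$; renaming $s$ to $a$ yields $\sum_{t_{-i}, a} u_i(\mu_i(t_i, a_i) a_{-i}, t_i t_{-i}) P(a \mid r_i t_{-i}) \Pi(t_i t_{-i})$, matching the right-hand side, with the deviation function now of the form $\mu_i : T_i \times A_i \to A_i$ because $S_i = A_i$.

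The one point needing care — and the closest thing to an obstacle — is verifying that the set of available deviations is preserved under passage to the canonical form, so that the quantifier ``for all $r_i$ and $\mu_i$'' in~\eqref{ineq:nasheq} matches the quantifier in the statement. Since in the canonical form a player sends $r_i \in R_i = T_i$ directly to the mediator and applies an arbitrary function $\mu_i : T_i \times A_i \to A_i$ to the advice it receives, the two parametrisations of deviations coincide term by term under the identifications $R_i = T_i$ and $S_i = A_i$. Because the substitution is an exact rewriting that neither strengthens nor weakens the inequality in either direction, the desired ``if and only if'' follows immediately.
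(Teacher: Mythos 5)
Your proof is correct and follows exactly the route the paper intends: the paper gives no explicit proof of Proposition~\ref{prop:canEq}, merely noting that the condition follows from the solution being pure and from the identifications $R_i = T_i$, $S_i = A_i$, which is precisely the direct substitution into~\eqref{ineq:nasheq} that you carry out. Your additional check that the space of deviations $(r_i,\mu_i)$ is preserved under these identifications is the right point to verify for the ``if and only if'' and is handled correctly.
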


A standard way to compare different equilibria is to consider the mean or total payoff of all the players~\cite{kaneko79} (we consider here the former measure). 
This quantity is called the \emph{social welfare}, and is obtained by considering the induced probability distribution of a given solution.

\begin{definition}[Social welfare]
For a game $G$, the \emph{social welfare} of a solution $(f, g, \pi, C)$ is defined as a function of its induced probability distribution $P$ by: 
\begin{equation} \label{eq:SW}
SW_G(P) = \frac{1}{n} \sum_{i}  \sum_{a, t} u_i(a, t)P(a|t)\Pi(t).
\end{equation}
\end{definition}

Since the social welfare is a function only of the induced probability distribution, we will sometimes simply speak of the social welfare of a probability distribution or correlation, leaving it implicit that there exists a solution inducing the probability distribution in question.
Note that two solutions inducing the same probability distribution therefore have the same social welfare. This is notably the case for a solution and its canonical form.

\subsection{Different types of correlations}
\label{eq:corr_families}

For a game $G$, the equilibria that players can access depend on the type of correlation resources they can share. 
Here we present several different families of correlations that have previously been studied in this context, and for each family $\C$ we consider the corresponding set of \emph{induced equilibrium correlations} associated to the $\C$-Nash equilibria for the class.
\begin{definition}[Set of induced equilibrium correlations] For a correlation family $\C$, the set of induced equilibrium correlations is the set of probability distributions
\begin{equation} \label{eq:eq_set}
	 \{P \ : \ \text{there exists a } \C\text{-Nash equilibrium } (f, g, \pi, C) \ \text{inducing $P$ as per Eq.~\eqref{eq:induced_dist}}\}.
\end{equation}
\end{definition}

Because the social welfare of a solution is characterised by its induced probability distribution, these sets are useful to compare the different types of advice and the quality of equilibria they allow one to reach.

The first set of correlations we introduce corresponds to a situation where the advice carries no information, or, put differently, when it can be factorised as a product of $n$-uncorrelated local distributions $L_i$. 

\begin{definition}[Factorable correlations] A factorable correlation $C$ is a probability distribution that can be written
\begin{equation}
    C(s|r)=L_1(s_1|r_1)\cdots L_n(s_n|r_n).
\end{equation}
\end{definition}

As such, the correlated advice can be absorbed into each player's local (mixed) strategy and this situation is equivalent to a setting without any correlated advice. The equilibria obtained are the basic Nash equilibria and the corresponding induced equilibrium correlations are denoted, for a given game $G$, $\text{Nash}(G)$.

The situation where players can share a correlated random variable corresponds to a family of correlations called local correlations.
\begin{definition}[Local correlations]
	A local correlation $C$ is a probability distribution that can be written
	\begin{equation}
		C(s|r) = \sum_{\gamma} V(\gamma) L_1(s_1|r_1,\gamma)L_2(s_2|r_2,\gamma)\cdots L_n(s_n|r_n,\gamma),
	\end{equation}
	where $V(\gamma)$ is a probability distribution over the shared random variable $\gamma$.
\end{definition}
In local correlations, any correlation between the players is solely due to $\gamma$; conditioned on its value, the advice a player receives from the mediator $L_i(s_i|r_i,\gamma)$ is uncorrelated from that of the other players. One can interpret this as the players receiving some correlated advice without the need to transfer any information about their type to the mediator, or simply having access to shared random variables in a scenario without mediator. 
Equilibria obtained using local correlations correspond to those intuitively obtainable using ``classical'' resources, and we denote by $\text{Corr}(G)$ the corresponding set of induced equilibrium correlations.\footnote{Note that this terminology is historical, and $\text{Corr}(G)$ denotes the set of equilibria obtained with \textit{local} correlations as advice which, is not the most general class of correlation.}

A more general family of correlations that has been considered in this context is that of belief invariant, or non-signalling, correlations~\cite{forges06,lehrer10}.
With these correlations, the action of the mediator may depend on the inputs of all parties, but in such a way that the knowledge of $(s_i, r_i)$ does not give information about the other inputs $r_j$. 
This can be interpreted as a privacy guarantee which can be valuable in conflicting interest games. 

To state these correlations, let us denote, for a subset $I \subset [n]:=\{1,\dots,n\}$, $R_I = \times_{i \in I} R_i$ (and likewise for $S_I$).
\begin{definition}[Belief invariant (non-signalling) correlations~\cite{forges06,kalai_how_2014}]
A correlation $C$ is belief invariant or non-signalling if, for any set of players $I \in [n]$, with $J = [n] \setminus I$ we have
\begin{equation}
    \forall s_I \in S_I,\ \forall r_I\in R_I, \ \forall r_J, r'_J \in R_J \quad \sum_{s_J \in S_J} C(s_I s_J | r_I r_J) = \sum_{s_J \in S_J} C(s_I s_J | r_I r'_J).
\end{equation}
\end{definition}
We denote the corresponding set of induced equilibrium correlations $\text{B.I.}(G)$.

Finally, the case where players are allowed, via the mediator, to share (or ``communicate'') arbitrary information about their type with the others corresponds to the situation where $C(s|r)$ may be an arbitrary probability distribution. Following~\cite{auletta_belief-invariant_2016}, we denote the corresponding induced equilibrium correlations $\text{Comm}(G)$. 

The inclusion structure of these families of correlations leads to a corresponding structure on the induced equilibrium correlations, namely~\cite{auletta_belief-invariant_2016,masanes_general_2006}:
\begin{equation}
	\text{Nash}(G) \subseteq \text{Corr}(G) \subseteq \text{B.I.}(G) \subseteq \text{Comm}(G).
\end{equation}

When a family of correlations $\C$ is closed under pre- and post-processing, the set of induced equilibrium correlations is generated by its canonical $\C$-Nash equilibria.
This is readily verified to be the case for all the classes we study in this paper.

\section{Quantum strategies}
\label{sec:q_strat}

The settings discussed above are classical insofar as the players have only classical capabilities: they interact with their advice classically and can essentially only choose to follow it, or ignore it and play some other strategy.

Quantum mechanics, however, allows for another natural possibility: for the players to receive intrinsically quantum advice.  

Several previous approaches have studied different ways of formalising this approach for non-cooperative games~\cite{la_mura_correlated_2005,10.1145/2090236.2090241,pappa_nonlocality_2015,auletta_belief-invariant_2016, Bostanci2022quantumgametheory} (see~\cite{khan2018quantum} for a review).
Here we follow the approach of~\cite{auletta_belief-invariant_2016} in which the players follow a fundamentally quantum strategy.
They each receive part of a multipartite quantum state as advice (e.g., provided again by an external mediator, or by the players interacting before playing the game), and can then perform measurements on the shared quantum system to correlate their actions (see Figure~\ref{fig:scenarios}(a)).

\begin{definition}[Quantum strategy]
A quantum strategy for a player $i$ is a set of positive-operator-valued measures (POVMs), $\M^{(i)} = \{M^{(i)}_{t_i}\}_{t_i \in T_i}$, where for each $t_i$ the POVM $M^{(i)}_{t_i} = \{M^{(i)}_{a_i|t_i}\}_{a_i \in A_i}$ satisfies $\sum_{a_i} M^{(i)}_{a_i | t_i} = \id$ and, for all $a_i,t_i$, $ M^{(i)}_{a_i|t_i} \geq 0$.
\end{definition}

A quantum solution thus consists of the POVMs of each player and the shared quantum state representing the advice.
\begin{definition}[Quantum solution]
A quantum solution $(\M, \rho)$ consists of a quantum state $\rho \in \S(\H_1 \otimes \dots \otimes \H_n)$ and sets of POVMs $\M = (\M^{(1)}, \dots, \M^{(n)})$ for every player. 
Each player has access to their part of $\rho$ in the corresponding Hilbert space $\H_i$. 
A quantum solution induces a probability distribution $P$ via the Born rule
\begin{equation}
    P(a|t) = \tr[\rho\,\left(M^{(1)}_{a_1|t_1}\otimes\cdots\otimes M^{(n)}_{a_n|t_n}\right)].
\end{equation}
  
\end{definition}

Crucially, compared to the classical solutions with correlated advice discussed previously, the notion of an equilibrium must be adjusted to take into account the quantum nature of the strategies and correspondingly greater freedom of players.
In this setting, a quantum solution is a \emph{quantum equilibrium} if no player can improve their payoff by changing their (local) choice of quantum strategy, i.e., by choosing different POVMs.

\begin{definition}[Quantum equilibrium]
 \label{def:Q}
 A quantum solution $(\M, \rho)$, is a \emph{quantum equilibrium} if for every player $i$, for every type $t_i$ and any POVM $N^{(i)} = \{N^{(i)}_{a_i}\}_{a_i \in A_i}$:
 
\begin{align}
  & \sum_{t_{-i}, a} u_{i}(a,t) \tr[\rho\,(M_{a_1|t_1}^{(1)}\otimes \cdots \otimes M_{a_n|t_n}^{(n)})]\Pi(t)  \\
  \geq & \sum_{t_{-i}, a} u_{i}(a,t) \tr[\rho\,(M_{a_1|t_1}^{(1)}\otimes \cdots \otimes M_{a_{i-1}|t_{i-1}}^{(i-1)}\otimes N_{a_{i}}^{(i)} \otimes M_{a_{i+1}|t_{i+1}}^{(i+1)} \otimes \cdots \otimes M_{a_{n}|t_{n}}^{(n)})] \Pi(t).\notag
\end{align}
\end{definition}

For a game $G$, we denote by $Q(G)$ the set of correlations induced by quantum equilibria. 
This set was previously studied in~\cite{auletta_belief-invariant_2016}, where it was compared to the types of classical advice presented in Section~\ref{eq:corr_families}.\footnote{Note, however, that Auletta et al.~\cite{auletta_belief-invariant_2016} denoted it $\text{Quantum(G)}$.}

Note that because no constraint is placed on the dimension of Hilbert spaces for strategies in defining $Q(G)$, we can without loss of generality assume the players' measurements are projective. 
Indeed, by taking a Naimark dilation of their POVMs and incorporating the local ancillary systems into the state, one obtains another quantum solution with projective measurements, producing the same correlations and which is an equilibrium if and only if the initial solution was one.
It is, however, not \emph{a priori} clear that one can consider only pure states. Indeed, while one can purify $\rho$ and obtain the same correlations, if $\rho$ is entangled it is no longer clear that the purified solution remains an equilibrium as the parties could perform measurements on the additional purification spaces.
Additionally, in contrast to the case of correlated advice, there is no clear notion of a canonical quantum solution. 

Let us finally note that, as also pointed out in~\cite{auletta_belief-invariant_2016}, one can verify that a solution is a quantum equilibrium using semidefinite programming. 
Indeed, we can maximise the average payoff of an individual player while keeping the strategies of the others, as well as the state $\rho$, fixed. 
If their mean payoff is already maximised by their current POVM $\{M^{(i)}_{a_i|t_i}\}_{a_i}$ for each input $t_i$, it means that they have no incentive to unilaterally change their strategy. 
This can be verified for each of the $n$ players separately.

\section{Quantum correlations}
\label{sec:q_corr}

In the type of quantum solutions introduced above, players share a potentially entangled quantum state and perform their own measurements. 
This supposes that they each have direct access to some local quantum devices. 
However, because of the complexity of such devices, it is also interesting to consider the case where they have classical access to multipartite quantum correlations, a situation which is more likely to appear in the near future. For instance, one could consider several spatially separated mediators sharing an entangled quantum state, each mediator communicating with one player as shown in Fig.~\ref{fig:scenarios}(b). We note that this provides a physical means to realise certain nonlocal belief invariant (i.e., no-signalling) correlations without requiring a single mediator to have access to information about all the players' types, which may be undesirable in many situations. Another way this scenario could arise is if the players access the correlation via black-box measuring devices which are provided to them before the start of the game. In either case, the players do not need any direct quantum capabilities.

This setting of classical access to quantum resources is extensively studied in device-independent approaches to quantum information~\cite{acin07,brunner_bell_2014}, and many quantum advantages or protocols can be adapted for it, such as the ability to perform blind quantum computation~\cite{broadbent_universal_2009}, where a player delegates their quantum computations to a server in a way that keeps their input, output and computation private.
Here we study how such resources are useful in non-cooperative games, and in particular in comparison to the fully-quantum strategies presented in the previous section.

Formally, this setting can be seen as a special case of the scenario with correlated advice provided by a single mediator depicted in Fig.~\ref{fig:single_med}, with $C$ being taken to be the set of $n$-partite quantum correlations, which can be obtained by performing individual measurements on each part of an $n$-partite quantum state.

\begin{definition}[Quantum correlations]
 \label{def:q_correlations} 
 A quantum correlation is a probability distribution $C(s|r)$ that can be written in the form

\begin{equation}
C(s|r) = \tr\big[\rho\,(M^{(1)}_{s_1|r_1} \otimes \dots \otimes M^{(n)}_{s_n | r_n})\big]
\end{equation}
for some quantum state $\rho \in S(\mathcal{H}_1 \otimes \dots \otimes \mathcal{H}_n)$ and POVMs $\M^{(i)}=\{M^{(i)}_{r_i}\}_{r_i\in R_i}$ with $M^{(i)}_{r_i}=\{M^{(i)}_{s_i|r_i}\}_{s_i\in S_i}$ in some Hilbert spaces $\mathcal{H}_i$.
\end{definition}

We denote the family of quantum correlations by $\C_Q$ and write the set of induced equilibrium correlations, as defined as in Eq.~\eqref{eq:eq_set}, as $Q_{\text{corr}}(G)$.

Because quantum mechanics obeys relativistic causality, it is well known that $n$-partite quantum correlations are non-signalling, and thus $Q_{\text{Corr}}(G) \subseteq \text{B.I.}(G)$. 
Since local correlations can be obtained from quantum correlations, we also have that $\text{Corr}(G) \subseteq Q_{\text{corr}}(G)$.
Finally, as shown by the following proposition, we can reduce the study of $\C_Q$-Nash equilibria to that of canonical $\C_Q$-Nash equilibria, allowing us to consider the simplified definition of Nash equilibrium for pure solutions as in Proposition~\ref{prop:canEq}.

\begin{proposition}
\label{prop:closed}
For any game $G$ we have $Q_\mathrm{corr}(G) \subseteq \C_Q$. Furthermore, for any probability distribution $P \in Q_\mathrm{corr}(G)$, the canonical solution $(id_T, id_A, P)$ is a $\C_Q$-Nash equilibrium.
\end{proposition}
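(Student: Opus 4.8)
The plan is to prove both claims at once by constructing, from a given $\C_Q$-Nash equilibrium $(f,g,\pi,C)$ inducing $P$, a single family of effective POVMs that absorb each player's local pre- and post-processing into measurements on the underlying state. First I would write the quantum correlation as $C(s|r)=\tr[\rho\,(M^{(1)}_{s_1|r_1}\otimes\cdots\otimes M^{(n)}_{s_n|r_n})]$ as in Definition~\ref{def:q_correlations}. The key structural fact is that the latent variables are \emph{private}: the mixed-strategy definition gives $\pi(\lambda)=\prod_i\pi_i(\lambda_i)$, and $f$, $g$ and the Kronecker factor all factorise across players. Substituting into Eq.~\eqref{eq:induced_dist} and using multilinearity of the tensor product lets me push the sums over $\lambda_i,s_i$ into each tensor factor, giving $P(a|t)=\tr[\rho\,\bigotimes_i \tilde M^{(i)}_{a_i|t_i}]$ with
\[
\tilde M^{(i)}_{a_i|t_i}=\sum_{\lambda_i,s_i}\pi_i(\lambda_i)\,\delta_{g_i(t_i,s_i,\lambda_i),a_i}\,M^{(i)}_{s_i|f_i(t_i,\lambda_i)}.
\]
I would then check these are genuine POVMs: positivity is immediate since each summand is a nonnegative multiple of a positive operator, and normalisation follows because $\sum_{a_i}\delta_{g_i(t_i,s_i,\lambda_i),a_i}=1$ collapses the sum to $\sum_{\lambda_i}\pi_i(\lambda_i)\sum_{s_i}M^{(i)}_{s_i|f_i(t_i,\lambda_i)}=\sum_{\lambda_i}\pi_i(\lambda_i)\,\id=\id$. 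This exhibits $P$ in the Born form of Definition~\ref{def:q_correlations}, proving $Q_\mathrm{corr}(G)\subseteq\C_Q$.

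For the second claim, the condition $C\in\C_Q$ in Definition~\ref{def:CNash} applied to the canonical solution $(id_T,id_A,P)$ is exactly the first claim, so what remains is the Nash condition, which by Proposition~\ref{prop:canEq} I only need to verify in its simplified canonical form. Since $(f,g,\pi,C)$ and its canonical solution induce the same $P$, every player's honest payoff is identical in the two solutions; the task thus reduces to showing no \emph{deviation} in the canonical setting beats this common value. Given a canonical deviation of player $i$ (sending an alternative input $r_i$ to the mediator $P$ and post-processing with $\mu_i:T_i\times A_i\to A_i$), I would exhibit a matching deviation in the original solution: player $i$ feeds the input $f_i(r_i,\lambda_i)$ to $C$ and outputs $\mu_i(t_i,g_i(r_i,s_i,\lambda_i))$, drawing $\lambda_i$ from $\pi_i$ while all other players play honestly. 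Expanding $P(a|r_it_{-i})$ through the $\tilde M$ operators, the Kronecker deltas fix $a_i=g_i(r_i,s_i,\lambda_i)$ and $a_j=g_j(t_j,s_j,\lambda_j)$; comparing term by term with the deviation payoff of Definition~\ref{def:NashEq} then shows the two payoffs coincide.

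The main obstacle is justifying that this pulled-back object is a legitimate deviation against which the original equilibrium is robust, since the Remark preceding Proposition~\ref{prop:canEq} warns that canonical forms need not be $\C$-Nash equilibria for generic $\C$. The subtlety is that my construction is a \emph{mixed} deviation: it uses the private randomness $\lambda_i$ and sends $f_i(r_i,\lambda_i)$ rather than $r_i$ itself, whereas Definition~\ref{def:NashEq} literally states robustness only against pure deviations. I would resolve this by invoking the linearity remark stated before Definition~\ref{def:NashEq}: robustness against all pure deviations implies robustness against their convex combinations. Decomposing the pulled-back deviation over the values of $\lambda_i$ writes it as exactly such a $\pi_i$-weighted combination of pure deviations (for each fixed $\lambda_i$, player $i$ sends the fixed input $f_i(r_i,\lambda_i)$ and applies the deterministic map $s_i\mapsto\mu_i(t_i,g_i(r_i,s_i,\lambda_i))$), so its payoff is an average of pure-deviation payoffs, each bounded above by the honest payoff. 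Chaining the equality of payoffs from the second paragraph with this bound yields the inequality of Proposition~\ref{prop:canEq}, completing the argument. Conceptually, the whole proof rests on $\C_Q$ being closed under local classical pre- and post-processing together with private randomisation, which is precisely what allows the mediator to absorb each player's strategy without enlarging the set of available deviations.
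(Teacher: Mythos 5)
Your proof is correct and follows essentially the same route as the paper: the same effective-POVM construction $\tilde M^{(i)}_{a_i|t_i}$ absorbing each player's pre- and post-processing establishes $P\in\C_Q$, and the equilibrium claim is obtained by pulling canonical deviations back to deviations of the original solution. The paper states this last step in one line, whereas you supply the details (including the reduction of the mixed pulled-back deviation to a $\pi_i$-weighted average of pure ones), which is a welcome clarification rather than a different argument.
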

\begin{proof}
Let $P \in Q_\textrm{corr}(G)$ be an induced equilibrium correlation. Then, by definition, there exists a $\C_Q$-Nash equilibrium $(f, g, \pi, C)$ inducing $P$. Because $C \in \C_Q$, there exist $(\rho, \M)$ such that for all $(s, r) \in S \times R$:
\begin{equation}
	C(s|r) = \tr \big[\rho\,(M^{(1)}_{s_1|r_1} \otimes \dots \otimes M^{(n)}_{s_n | r_n})\big].
\end{equation}
Consider then the new POVMs
\begin{equation}
	\label{eq:mixingPOVMs}
	\tilde{M}^{(i)}_{a_i|t_i} = \sum_\lambda M^{(i)}_{s_i|f_i(t_i, \lambda_i)} \pi(\lambda)\, \delta_{g(t,s,\lambda),a}.
\end{equation}
Then the induced distribution $P$ as defined in Eq.~\eqref{eq:induced_dist} can be written 
\begin{equation}
    P(a|t) = \tr\big[\rho\,(\tilde{M}^{(1)}_{a_1|t_1} \otimes \dots \otimes \tilde{M}^{(n)}_{a_n|t_n})\big],
\end{equation}
and thus $P \in \C_Q$.
Finally, because $(f, g, \pi, C)$ is a Nash equilibrium, it follows that $(id_T, id_A, P)$ is a $\C_Q$-Nash equilibrium.
\end{proof}

\section{Comparing different quantum resources in non-cooperative games}
\label{sec:inclusion}

We have introduced two quantum settings, one where the players each have their own quantum device and share an entangled quantum state, and another where the players receive part of a quantum correlation as advice, either by delegating their quantum devices to one or several mediators or by being afforded black-box access to quantum devices.
In the former setting, the players have quantum resources, while in the latter they can be considered to be fully classical, with only indirect access to quantum devices. 
For a game $G$, the Nash equilibria accessible in each setting are fully characterised by the induced equilibrium correlations $Q(G)$ and $Q_{\text{corr}}(G)$, respectively. They are both subsets of the quantum correlations $\C_Q$. 
Therefore, a natural way of comparing the two quantum settings is to compare $Q(G)$ and $Q_{\text{corr}}(G)$.

Naively, one may expect fully quantum players to be able to achieve more equilibria than players with only classical access to quantum devices.
However, the reality is actually precisely the opposite: the notion of a quantum equilibrium is more restrictive, since the type of deviations they can make (here, choosing an alternative local POVM), is more general, making it more difficult to obtain an equilibrium.
Formally, we have the following result.

\begin{theorem} 
\label{theorem:inclusion}
For any game $G$, $Q(G)\subseteq Q_{\mathrm{corr}}(G)$.
\end{theorem}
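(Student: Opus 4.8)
The plan is to take any $P \in Q(G)$, witness it by a quantum equilibrium $(\M, \rho)$, and show that its canonical solution $(id_T, id_A, P)$ is a $\C_Q$-Nash equilibrium, which gives $P \in Q_{\mathrm{corr}}(G)$ immediately. The first observation is that $P$ is itself a quantum correlation: identifying $R_i$ with $T_i$ and $S_i$ with $A_i$, the Born-rule expression $P(a|t) = \tr[\rho\,(M^{(1)}_{a_1|t_1} \otimes \cdots \otimes M^{(n)}_{a_n|t_n})]$ exhibits $P \in \C_Q$. Hence the canonical correlation $C = P$ lies in $\C_Q$, and it only remains to verify that $(id_T, id_A, P)$ is a Nash equilibrium, for which I would invoke the simplified criterion of Proposition~\ref{prop:canEq}.

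The heart of the argument is to show that every classical deviation available in the canonical solution corresponds to a quantum deviation available in the quantum equilibrium. Fix a player $i$, a type $t_i$, an alternative input $r_i$, and a deterministic post-processing function $\mu_i : T_i \times A_i \to A_i$. The key construction is the coarse-grained POVM
\[
N^{(i)}_{b_i} = \sum_{a_i \,:\, \mu_i(t_i, a_i) = b_i} M^{(i)}_{a_i | r_i},
\]
which is a legitimate POVM since each term is positive and $\sum_{b_i} N^{(i)}_{b_i} = \sum_{a_i} M^{(i)}_{a_i|r_i} = \id$. Intuitively, this is the measurement that player $i$ effectively performs when, instead of honestly answering type $t_i$, they measure with setting $r_i$ and then relabel the outcome through $\mu_i$.

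The main computation is then to verify that substituting $N^{(i)}$ into the quantum-deviation side of Definition~\ref{def:Q} reproduces exactly the right-hand side of the inequality in Proposition~\ref{prop:canEq}. Expanding $N^{(i)}_{b_i}$, absorbing the defining sum into the sum over outputs, and using that the remaining players output their advice unchanged (so $b_{-i} = a_{-i}$ and the factors $M^{(j)}_{a_j|t_j}$ are untouched), the quantum-deviation payoff collapses to $\sum_{t_{-i}, a} u_i(\mu_i(t_i,a_i)a_{-i}, t)\, P(a|r_i t_{-i})\, \Pi(t)$, which is precisely the classical deviation payoff. Since the honest payoffs on both sides coincide (both equal the payoff that $P$ assigns under honest play, as $P$ is induced by $(\M,\rho)$), the quantum equilibrium inequality applied to $N^{(i)}$ yields the required Nash inequality.

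I expect the only delicate points to be bookkeeping: checking that the $t_i$-dependence of $\mu_i$ is harmless---since the deviation inequality in Proposition~\ref{prop:canEq} is verified at a fixed $t_i$, the POVM $N^{(i)}$ may legitimately depend on $t_i$---and confirming that each classical relabel-and-reroute deviation corresponds to a genuine quantum POVM, so that the quantum equilibrium hypothesis, which quantifies over \emph{all} POVMs, can be applied to it. Conceptually this is exactly the asymmetry highlighted before the statement: fully quantum players may deviate with an arbitrary POVM, whereas $Q_{\mathrm{corr}}$ players may only relabel and reroute classical advice, so the strictly larger set of quantum deviations makes the quantum equilibrium condition harder to satisfy, yielding $Q(G) \subseteq Q_{\mathrm{corr}}(G)$.
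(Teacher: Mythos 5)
Your proof is correct and follows essentially the same route as the paper's: both reduce a classical deviation $(r_i,\mu_i)$ in the canonical solution $(id_T,id_A,P)$ to a POVM deviation in the quantum equilibrium, the paper arguing contrapositively and you directly. Your version in fact makes explicit the coarse-grained POVM $N^{(i)}_{b_i}=\sum_{a_i:\mu_i(t_i,a_i)=b_i}M^{(i)}_{a_i|r_i}$ and the bookkeeping that the paper only gestures at via Proposition~\ref{prop:closed}, so nothing is missing.
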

\begin{proof}
	We proceed by contradiction.
Let $C \in Q(G)$ be a correlation obtained by an equilibrium strategy $(\rho, \M)$, and suppose that $C \notin Q_{\text{corr}}(G)$. 
Then $C\in \C_Q$ and by assumption, the canonical solution $(id_T, id_A, C)$ is not a $\C_Q$-Nash equilibrium, so there exists a player $i$ who can increase their payoff by unilaterally deviating from their local strategy. 
However, this deviation can easily be expressed as a change in the POVMs in the quantum equilibrium solution, analogously to Eq.~\eqref{eq:mixingPOVMs} in the proof of Proposition~\ref{prop:closed}. 
Therefore, player $i$ in the quantum solution could also increase their payoff by unilaterally modifying their POVMs, which contradicts the fact that $(\rho, \M)$ is an equilibrium.
\end{proof}

Showing that for some games the above inclusion is strict would prove that the two settings are not equivalent and that players which have ``classical'' access to quantum devices can reach more equilibria. In the next subsection, we exhibit a game for which this is the case.

\subsection{Strict inclusion in a family of three-player games}
\label{sec:strict}

To find such an example, we introduce a family of three-player games for which the questions and winning conditions are summarised in Table~\ref{table:3players}. 
Like the games specified in Tables~\ref{table:NC5} and~\ref{table:NC5sym}, these games are a form of multipartite binary graph game~\cite{anshu_contextuality_2020}, this time based on the $C_3$ graph.
We assume a uniform distribution $\Pi$ over the questions and the payoff function is defined as in Eq.~\eqref{eq:payoff} from the corresponding winning set $\W$. 
The $\text{NC}(C_3)$ family is parametrised by the two values $v_0, \, v_1$ determining the precise payoff function, and each new pair $(v_0, v_1)$ specifies a particular game within this family.

\begin{table}
  \centering
  \begin{tabular}{c | c}
    $\underset{t_1t_2t_3}{\text{Question}}$ & Winning conditions \\
    \hline
    \hline
    100 & $a_1 \oplus a_2 \oplus a_3 = 0$ \\
    010 & $a_1 \oplus a_2 \oplus a_3 = 0$ \\
    001 & $a_1 \oplus a_2 \oplus a_3 = 0$ \\
    111 & $a_1 \oplus a_2 \oplus a_3 = 1$ \\
    \hline
  \end{tabular}
  \caption{Winning conditions for the $\text{NC}(C_3)$ games.}
  \label{table:3players}
\end{table}

A striking feature of these games is that they have a quantum \emph{pseudo-telepathic} solution allowing the players to win all the time (i.e., with probability 1), while this is impossible when they only have access to classical solutions using local correlations (or shared random variables) as advice~\cite{brassard_quantum_2005}. 
The pseudo-telepathic solution involves the players sharing a $C_3$ graph state $CZ^{(1,2)}CZ^{(2,3)}CZ^{(3,1)} \ket{+}^{\otimes3}$ (i.e., a GHZ-like state~\cite{GHZ}) and measuring the observables $Z$ or $X$ depending on whether their type is 0 or 1, respectively.
Writing the induced correlation of this quantum solution as $P$, then, for any value of $(v_0, v_1)$, the solution $(id_T, id_A, P)$ is a $\C_Q$-Nash equilibrium and $P \in Q_{\text{corr}}(G)$~\cite{groisman_how_2020}.
Similarly, one can readily check using semidefinite programming (as described in Sec.~\ref{sec:q_strat}) that this solution gives a quantum equilibrium and hence $P \in Q(G)$ as well. 
The social welfare of this solution is readily calculated to be $\frac{1}{2}(v_0 + v_1)$~\cite{groisman_how_2020}.

However, by modifying slightly the pseudo-telepathic solution, we can find solutions that still are $\C_Q$-Nash equilibria but which are themselves no longer quantum equilibria. 
To this end, let us take an angle $\theta \in [0, \pi]$ which parameterises the partially-entangled ``tilted'' state $\ket{\psi_\theta} = \cos(\frac{\theta}{2}) \ket{0} + \sin(\frac{\theta}{2}) \ket{1}$. 
In this ``tilted'' solution, the players now share the quantum state  $\ket*{\Psi_{\text{tilt}(\theta)}}=CZ^{(1,2)}CZ^{(2,3)}CZ^{(3,1)} (\ket{\psi_\theta}\otimes\ket{+}\otimes\ket{+})$, and the first player measures the observables $(X+Z)/ \sqrt{2}$ and $(X-Z)/ \sqrt{2}$ for each type $t_1=0$ and $t_1=1$ respectively, while the second and third players measure the observables $Z$ and $X$ for types $0$ and $1$, respectively. 
(Here $Z$ and $X$ are the corresponding Pauli observables, and the $+1$ and $-1$ measurement outcomes are interpreted as $0$ and $1$, respectively.)
We write the corresponding POVMs $\mathcal{M}_{\text{tilt(}\theta)}$, the correlation induced by this solution $C_{\text{tilt$(\theta)$}}$, and define $\rho_{\text{tilt(}\theta)} = \ketbra*{\Psi_{\text{tilt(}\theta)}}$. 

\begin{proposition}
\label{prop:tiltedSolution}
    For some values of $v_0$ and $v_1$ defining a $\textup{NC}(C_3)$ game $G$, there exist angles $\theta \in [0, \pi]$ for which the tilted solution induces a quantum correlated equilibrium but is noy a quantum equilibrium. That is, $C_\textup{tilt$(\theta)$} \in Q_\textup{corr}(G)$ but $C_\textup{tilt$(\theta)$} \not\in Q(G)$.
\end{proposition}

A simple example proving this proposition is obtained by taking $v_0 = v_1 = 1$ (a special case for which the players' interests are no longer in conflict) and $\theta = \pi/2$, for which the tilted solution corresponds to the pseudo-telepathic solution but with different measurements.
Because $C_\text{tilt$(\theta)$} \in \C_Q$, we know from Proposition~\ref{prop:closed} that $C_\text{tilt$(\theta)$} \in Q_\textrm{corr}(G)$ if and only if $(id_T, id_A, C_\text{tilt$(\theta)$})$ is a $\C_Q$-Nash equilibrium. This can readily be verified to be the case using the canonical Nash equilibrium conditions of Proposition~\ref{prop:canEq}.

The social welfare of this solution is $0.8536$, which is higher than that of the best classical solution (with a social welfare of $0.75$, see Figure~\ref{fig:results}) but lower than that of the pseudo-telepathic solution, which has a social welfare of $1$. By modifying their POVMs to that of the pseudo-telepathic solution, the first player can increase their payoff by $1-0.8536 = 0.1464$, showing that the tilted solution is indeed not a quantum equilibrium. 
We note that by varying the values of the parameters, one can find many more examples of such solutions and games; for less specific parameters, the quantum equilibrium property can be checked using SDPs as mentioned in Section~\ref{sec:q_strat}. 

It is important to stress that this argument is not, however, sufficient to show that $C_\text{tilt$(\theta)$} \not \in Q(G)$, as we did not rule out that some other quantum solution $(\tilde{\rho}, \tilde{\M})$ could induce the same correlation and be a quantum equilibrium.

\subsubsection{Self-testing of the tilted solution}

To demonstrate that the correlation $C_{\text{tilt(}\theta)}$ is not in $Q(G)$, we thus need to verify that no quantum equilibrium induces this correlation. 
However, determining all quantum solutions inducing a given probability distribution is a difficult task when one does not bound the dimension of the quantum systems being shared. 
In order to solve this problem, we will exploit methods from the self-testing of quantum systems~\cite{Supic2020selftestingof}.
Self-testing allows one to show that certain quantum correlations are only achievable by specific quantum solutions, up to local unitary transformations, and will allow us to show that any solution generating $C_{\text{tilt(}\theta)}$ is essentially equivalent to the tilted solution.

We begin by presenting the general definition of self-testing that we will build our approach upon, reformulating it in the language of solutions.

\begin{definition}[Self-testing of a solution~\cite{Supic2020selftestingof}]
\label{def:selftesting}
A correlation $C$ self-tests a solution $(\ketbra*{\psi}{\psi}, \M)$ if, for any solution $(\tilde\rho, \tilde{\M})$ that induces $C$ and any purification $\ket*{\tilde\psi}$ of $\tilde\rho$, there exists a local isometry $\Phi = \Phi_1 \otimes \dots \otimes \Phi_n$ and a state $\ket{\xi}$ such that
\begin{equation}
\Phi \big[\ket*{\tilde{\psi}} \big]  = \ket{\psi} \otimes \ket{\xi},
\end{equation}
and for any $1 \le i \le n$, $t_i \in T_i$ and $a_i \in A_i$
\begin{equation}
\Phi \big[\tilde{M}^{(i)}_{a_i|t_i} \ket*{\tilde{\psi}} \big]  = \big(M^{(i)}_{a_i|t_i}  \ket{\psi} \big) \otimes \ket{\xi}.
\end{equation}
\end{definition}

Note that we assume, as is standard in self-testing arguments~\cite{Supic2020selftestingof}, that $\rho$ is purified into an additional Hilbert space that is not accessible to any of the involved parties, upon which both the isometries $\Phi_i$ and POVMs $\mathcal{M}^{(i)}$ act trivially.
We also recall that, as discussed after Definition~\ref{def:Q}, we can without loss of generality assume that any solution we consider involves only projective measurements, and this fact is typically required in self-testing arguments.

The notion of self-testing in Definition~\ref{def:selftesting}, although powerful, is not sufficient for our purposes.
Indeed, we need to ensure that $(\tilde\rho,\tilde{\M})$ is a quantum equilibrium if and only if $(\ketbra{\psi},\M)$ is also one.
For this we need to be able to reason about the effect of any deviation (i.e., alternative POVM) a player may take.
To this end, we prove the following proposition about the tilted solution that constitutes a somewhat stronger notion of self-testing.

\begin{proposition}
\label{prop:selftest}
Let $(\tilde\rho, \tilde{\M})$ be a solution on some arbitrary Hilbert spaces which induces the distribution $C_{\textup{tilt$(\theta)$}}$ with $\theta \in (\frac{\pi}{4}, \, \frac{3\pi}{4})$, with $\tilde\M$ projective and $\ket*{\tilde\psi}$ a purification of $\tilde\rho$. 
For each party $i$ and $t_i\in T_i$, define also the observables $\tilde{A}^{(i)}_{t_i} = \tilde{M}^{(i)}_{0|t_i} - \tilde{M}^{(i)}_{1|t_i}$,
\begin{equation}
\tilde{X}_1 = \frac{\tilde{A}^{(1)}_0 + \tilde{A}^{(1)}_1}{\sqrt{2}}, \text{ and } \tilde{Z}_1 = \frac{\tilde{A}^{(1)}_0 - \tilde{A}^{(1)}_1}{\sqrt{2}},
\end{equation}
and for $i=2,3$
\begin{equation}
\label{eq:2tilde}
 \tilde{X}_i = \tilde{A}^{(i)}_1 \text{ and } \tilde{Z}_i = \tilde{A}^{(i)}_0.
\end{equation}
Then there exists a local isometry $\Phi = \Phi_1 \otimes \Phi_2 \otimes \Phi_3$ and state $\ket{\xi}$ such that
\begin{equation}
\Phi \big[\ket*{\tilde{\psi}} \big]  = \ket*{\Psi_{\textup{tilt}(\theta)}} \otimes \ket{\xi},
\end{equation}
and for $1 \le i \le 3$
\begin{equation}
\begin{aligned}
&\Phi \big[\tilde{X}_i \ket*{\tilde{\psi}} \big]  = \big(X_i  \ket*{\Psi_{\textup{tilt}(\theta)}} \big) \otimes \ket{\xi}, \\
&\Phi \big[\tilde{Z}_i \ket*{\tilde{\psi}} \big]  = \big(Z_i  \ket*{\Psi_{\textup{tilt}(\theta)}} \big) \otimes \ket{\xi},
\end{aligned}
\end{equation}
and thus the correlation $C_{\textup{tilt}(\theta)}$ self-tests the solution $(\rho_{\textup{tilt}(\theta)},\mathcal{M}_{\textup{tilt}(\theta)})$, where
\begin{equation}
	\rho_{\textup{tilt}(\theta)}=\ketbra*{\Psi_{\textup{tilt}(\theta)}}
\end{equation} 
Furthermore, we have that
\begin{equation}
\label{eq:selftestplus}
\Phi \big[\tilde{X}_i\tilde{Z}_i \ket*{\tilde{\psi}} \big]  = \big(X_iZ_i  \ket*{\Psi_{\textup{tilt}(\theta)}} \big) \otimes \ket{\xi}.
\end{equation}
\end{proposition}

Eq.~\eqref{eq:selftestplus} in particular goes beyond the standard notion of self-testing and is crucial here: it will allow us to map the action of any observable on the tilted state $\rho_{\text{tilt}(\theta)}$ to the action of an observable on the alternative state $\tilde{\rho}$ and prove Theorem~\ref{theorem:self-test} below. 
The proof of Proposition~\ref{prop:selftest} is a based on a tailored Bell inequality, following a method previously used in \cite{PhysRevLett.124.020402} to self-test a partially entangled GHZ state. The details of the proof are given in Appendix~\ref{appendix:self-testing}. 

We can now prove the following theorem.

\begin{theorem}
\label{theorem:self-test}
Let $G$ be any three-player game and $\theta \in (\frac{\pi}{4}, \, \frac{3\pi}{4})$. Then the correlation $C_{\textup{tilt}(\theta)}$ is in $Q(G)$ if and only if the tilted solution is a quantum equilibrium for $G$.
\end{theorem}
\begin{proof}
By definition, if the tilted solution $(\rho_{\text{tilt}(\theta)}, \M_{\text{tilt}(\theta)})$ is a quantum equilibrium for $G$, then its induced correlation $C_{\text{tilt}(\theta)}$ is in $Q(G)$.

To prove the converse, let us suppose for the sake of contradiction that $(\rho_{\text{tilt}(\theta)}, \M_{\text{tilt}(\theta)})$ is not a quantum equilibrium for $G$ but $C_\text{tilt$(\theta)$}$ is in $Q(G)$. 
Then there exists a player $i$, a type $t_i$, and a POVM $N^{(i)} = \{N^{(i)}_{a_i}\}_{a_i \in A_i}$ such that for any $a, t_{-i}$ (and recalling that here $n=3$)
\begin{align}\label{eq:bettersoln}
  & \sum_{t_{-i}, a} u_{i}(a,t) \tr[(M_{a_1|t_1}^{(1)}\otimes \cdots \otimes M_{a_n|t_n}^{(n)}) \, \rho_{\text{tilt}(\theta)}]\Pi(t)  \\
  < & \sum_{t_{-i}, a} u_{i}(a,t) \tr[(M_{a_1|t_1}^{(1)}\otimes \cdots \otimes M_{a_{i-1}|t_{i-1}}^{(i-1)}\otimes N_{a_{i}}^{(i)} \otimes M_{a_{i+1}|t_{i+1}}^{(i+1)} \otimes \cdots \otimes M_{a_{n}|t_{n}}^{(n)}) \, \rho_{\text{tilt}(\theta)}] \Pi(t).\notag
\end{align}

Because we assumed that $C_\text{tilt$(\theta)$}$ is an element of $Q(G)$, there exists a different solution $(\tilde\rho, \tilde{\M})$  inducing $C_{\text{tilt}(\theta)}$ and which is a  quantum equilibrium.
As previously argued we can assume $\tilde{\M}$ is projective, and we consider an arbitrary purification $\ket*{\tilde{\psi}}$ of $\tilde\rho$ (with the purifying space inaccessible to the players).
By Proposition~\ref{prop:selftest} there exists a local isometry $\Phi = \Phi_1 \otimes \Phi_2 \otimes \Phi_3$ and a state $\ket{\xi}$ such that
\begin{equation}
\Phi \big[\ket*{\tilde{\psi}} \big]  = \ket*{\Psi_{\text{tilt$(\theta)$}}} \otimes \ket{\xi},
\end{equation} 
and for any $1 \le i \le 3$:
\begin{align}
&\Phi \big[\tilde{X}_i \ket*{\tilde{\psi}} \big]  = \big(X_i  \ket*{\Psi_{\text{tilt$(\theta)$}}} \big) \otimes \ket{\xi},\label{eq:tiltisometryX} \\
&\Phi \big[\tilde{Z}_i \ket*{\tilde{\psi}} \big]  = \big(Z_i  \ket*{\Psi_{\text{tilt$(\theta)$}}} \big) \otimes \ket{\xi}, \\
&\Phi \big[\tilde{X}_i\tilde{Z}_i \ket*{\tilde{\psi}} \big]  = \big(X_iZ_i   \ket*{\Psi_{\text{tilt$(\theta)$}}}\big) \otimes \ket{\xi}.
\end{align}
Since each $N_{a_i}^{(i)}$ is a Hermitian matrix it can be written $N_{a_i}^{(i)} = \alpha\id_i + \beta X_i + \gamma Z_i + \epsilon \,\mathrm{i} X_iZ_i$, $\alpha, \beta,\gamma,\epsilon\in \mathbb{R}$ and $\mathrm{i}$ the imaginary unit.
Defining $\tilde{N}_{a_i}^{(i)}:=\alpha\tilde{\id}_i + \beta \tilde{X}_i + \gamma \tilde{Z}_i + \epsilon\,\mathrm{i} \tilde{X}_i\tilde{Z}_i$ we then have
\begin{equation}\label{eq:tiltisometryB}
\Phi\big[\tilde{N}_{a_i}^{(i)}\ket*{\tilde{\psi}}\big]=\Phi \big[\big( \alpha\tilde{\id}_i + \beta \tilde{X}_i + \gamma \tilde{Z}_i + \epsilon \,\mathrm{i} \tilde{X}_i\tilde{Z}_i \big) \ket*{\tilde{\psi}} \big]  = \big(N_{a_i}^{(i)}  \ket*{\Psi_{\text{tilt$(\theta)$}}} \big) \otimes \ket{\xi}
\end{equation}
and we obtain the following chain of inequalities:
\begin{align}
  & \sum_{t_{-i}, a} u_{i}(a,t) \tr[(\tilde{M}^{(1)}_{a_1|t_1}\otimes \cdots \otimes \tilde{M}^{(n)}_{a_n|t_n}) \, \tilde{\rho}]\Pi(t)  \\
  = & \sum_{t_{-i}, a} u_{i}(a,t) \tr[(M_{a_1|t_1}^{(1)}\otimes \cdots \otimes M_{a_n|t_n}^{(n)}) \, \rho_{\text{tilt}(\theta)}]\Pi(t)  \\
  < & \sum_{t_{-i}, a} u_{i}(a,t) \text{tr}\Big[(M_{a_1|t_1}^{(1)}\otimes \cdots \otimes M_{a_{i-1}|t_{i-1}}^{(i-1)}\otimes N_{a_{i}}^{(i)} \otimes M_{a_{i+1}|t_{i+1}}^{(i+1)} \otimes  \notag\\[-3mm]
  & \hspace{75mm} \otimes \cdots \otimes M_{a_{n}|t_{n}}^{(n)}) \, \rho_{\text{tilt}(\theta)} \otimes \ketbra*{\xi}\Big] \Pi(t) \\
  = & \sum_{t_{-i}, a} u_{i}(a,t)\, \tr[\Phi[(\tilde{M}^{(1)}_{a_1|t_1}\otimes \cdots \otimes \tilde{M}^{(i-1)}_{a_{i-1}|t_{i-1}}\otimes \tilde{N}^{(i)}_{a_{i}} \otimes \tilde{M}^{(i+1)}_{a_{i+1}|t_{i+1}} \otimes \cdots \otimes \tilde{M}^{(n)}_{a_{n}|t_{n}}) \, \tilde{\rho}]] \Pi(t) \\
  = & \sum_{t_{-i}, a} u_{i}(a,t) \tr[(\tilde{M}^{(1)}_{a_1|t_1}\otimes \cdots \otimes \tilde{M}^{(i-1)}_{a_{i-1}|t_{i-1}}\otimes \tilde{N}^{(i)}_{a_{i}} \otimes \tilde{M}^{(i+1)}_{a_{i+1}|t_{i+1}} \otimes \cdots \otimes \tilde{M}^{(n)}_{a_{n}|t_{n}}) \, \tilde{\rho}] \Pi(t),
\end{align}
where the second line follows from the fact that $(\rho_\text{tilt$(\theta)$},\mathcal{M}_{\text{tilt$(\theta)$}})$ and $(\tilde{\rho},\tilde{\mathcal{M}})$ induce the same distribution, the third line follows from Eq.~\eqref{eq:bettersoln} and the fact that $\tr[\ketbra*{\xi}]=1$, the fourth line from Eq.~\eqref{eq:tiltisometryB} and the fact that $\ket*{\tilde{\psi}}$ is a purification of $\tilde{\rho}$, and the final line from the conservation of the trace under isometries.
Hence, there exist a player $i$ who can improve their payoff with an alternative POVM $\tilde{N}^{(i)}$ and therefore $(\tilde{\rho}, \tilde{\M})$ is not a quantum equilibrium, which is a contradiction.
\end{proof}

Recall that Proposition~\ref{prop:tiltedSolution} showed that, for the game $G$ from the $\text{NC}(C_3)$ family with $v_0=v_1=1$, the tilted solution $(\rho_{\text{tilt}(\theta)},\mathcal{M}_{\text{tilt}(\theta)})$ with $\theta=\pi/2$ induces a quantum correlated equilibrium but is not itself a quantum equilibrium. 
Together with Theorem~\ref{theorem:self-test} this shows that $C_{\text{tilt}(\theta)}$ is not in $Q(G)$ and we thus arrive at the following corollary.
\begin{corollary}
    There exists a game $G$ for which $Q(G) \subsetneq Q_{\textup{corr}}(G)$.
\end{corollary}

As a consequence, we see that for some games, players with ``classical'' access to quantum resources can reach strictly more equilibria than those with direct access to the shared quantum state. In the next section we will use numerical methods to study whether the quality of the equilibria accessible in each setting also differs.

\begin{remark}
	Although we formulated, for clarity, Proposition~\ref{prop:selftest} and Theorem~\ref{theorem:self-test} explicitly for the tilted solution, we note that we could have formulated a somewhat more general result here. Indeed, Theorem~\ref{theorem:self-test} can be readily generalised to any correlation $C$ arising from a solution $(\rho,\M)$ as long as one can prove a generalised self-testing result analogous to Proposition~\ref{prop:selftest} linking $C$ and $(\rho,\M)$. In particular, one requires analogues of Eqs.~\eqref{eq:tiltisometryX}--\eqref{eq:tiltisometryB} for a tomographically complete set of operators, something which can be obtained from standard self-testing proofs in a wide range of scenarios~\cite{Laura-Private-Comm}.
\end{remark}

\section{Optimal social welfare in the two quantum settings}
\label{sec:social_welfare}

In Section~\ref{sec:inclusion} we proved that for some games, players with ``classical'' access to a quantum resource can reach strictly more equilibria than players with their own quantum devices. A natural question to ask is whether or not those equilibria only accessible in the first setting are actually better in term of social welfare than the others.  Indeed, even if $Q_\textrm{corr}(G)$ and $Q(G)$ differ, the \emph{quality} of attainable equilibria -- in terms of social welfare -- might not.
More generally, the social welfare is a more relevant measure of how useful different resources are in non-cooperative games.
By studying it in these two quantum settings, we gain better insight into the power of quantum resources depending on what kind of access is provided  -- direct quantum access, or indirect classical access.

In this section we therefore study the problem of optimising the social welfare over $\C_Q$-Nash equilibria and quantum equilibria solutions, or, put different, of optimising the social welfare over equilibrium correlations in $Q_\textrm{corr}(G)$ and $Q(G)$, respectively.
More formally, we wish to solve the optimisation problem
\begin{equation}
	\label{eq:opt_problem_general}
\max_P SW_G(P) = \frac{1}{n} \sum_{a, t} \sum_{i} u_i(a, t) P(a|t)\Pi(t),
\end{equation}
where the maximisation is either over $P\in Q_\textrm{corr}(G)$ or $P\in Q(G)$, respectively. 
We note that this maximisation problem is related to that of finding the maximal violation of a Bell inequality~\cite{Brunner2013, brunner_bell_2014}. Indeed, for games without any conflicting interests the two problems are equivalent as all correlations maximising the social welfare will necessarily be (both quantum and $\C_Q$-Nash) equilibria. 
However, in the case of conflicting-interest games, the correlations maximising the social welfare are not necessarily equilibria, and this difference will be apparent in the results we present below.

One can check directly whether a given distribution $P$ is in $Q_{\text{corr}}(G)$ by verifying the (finite) set of linear constraints defined by the Nash equilibrium condition of Proposition~\ref{prop:canEq}. 
However, there is no simple characterisation of the set $\C_Q$ of quantum correlations, let alone of $Q_\textrm{corr}(G)$, making solving exactly this optimisation problem difficult in this case.
Instead, here we make use of the NPA hierarchy~\cite{navascues_convergent_2008} which provides a convergent set of necessary conditions, each which can be expressed as a semidefinite program, leading to increasingly better approximations of the set of quantum correlations.

By optimising over distributions $P$ compatible with a given ``level'' of the NPA hierarchy and additionally imposing the linear constraints for $P$ to be a Nash equilibrium, we can obtain increasingly better upper bounds on the social welfare of correlations in $Q_{\text{corr}}(G)$ using semidefinite programming.

For the second setting, determining whether a quantum solution $(\rho,\M)$ is an equilibrium -- and thus whether the induced distribution $P\in Q(G)$ -- is itself an SDP, as discussed earlier.
The full optimisation problem of Eq.~\eqref{eq:opt_problem_general} nonetheless remains highly nonlinear, and moreover one can no longer simply impose the equilibrium conditions at the level of the correlations on the NPA hierarchy.
Finding nontrivial upper bounds on the social welfare of correlations in $Q(G)$ is thus a complicated problem, and remains one of the main barriers to proving a separation between the two settings.

Instead, we can study the tightness of the upper bounds we obtain for $Q_\textrm{corr}(G)$ and the potential separation between the settings by optimising over explicit quantum strategies.
In particular, we can use a see-saw iterative optimisation approach~\cite{werner_bell_2001}, alternating between optimising the social welfare over the state $\rho$ and the measurements of different individual parties.
More precisely, we can relax Eq.~\eqref{eq:opt_problem_general} as an optimisation over quantum solutions as
\begin{equation}
\label{seesaw}
    \max_{\M^{(1)}} \cdots \max_{M^{(n)}} \max_{\rho} SW_G(P) = \frac{1}{n} \sum_{a, t} \sum_{i} u_i(a, t) \tr[\rho\,(M^{(1)}_{a_1|t_1}\otimes\cdots\otimes M^{(n)}_{a_n|t_n})]\Pi(t).
\end{equation}
Starting from different, randomly chosen solutions (for a given dimension for each Hilbert space $\H_i$), we iteratively solve one maximisation problem at a time while keeping the other parts of the solution fixed.
Each iteration is itself an SDP, and while the procedure is not guaranteed to converge, in practice, by iterating sufficiently many times, such see-saw algorithms perform well. Note that while the algorithm optimises the social welfare over solutions whose induced probability distribution is in $\C_Q$, it does not guarantee that these solutions are equilibria in either $Q(G)$ or $Q_{\textrm{corr}}(G)$.

To ensure that an equilibrium condition is verified, this initial see-saw optimisation can be complemented by further rounds of iterative optimisation in which equilibrium constraints are imposed.
For $Q_{\textrm{corr}}(G)$, one can add directly the additional condition that no player can increase their payoff by applying a deterministic function to their output.
For $Q(G)$, such constraints, however, are themselves expressed as SDPs, and thus cannot be added to each round of optimisation.
Instead, one can modify the objective function so that each player optimises their own payoff. 
While this may lower the overall social welfare, it seeks to bring the solution towards a quantum equilibrium, and in practice we find the algorithm generally converges rapidly to such an equilibrium solution.

This approach is evidently not guaranteed to find a global maximum of the social welfare, but it provides lower bounds on the social welfare of correlations in $Q(G)$ and $Q_\textrm{corr}(G)$. 
By repeating it on a large number of randomly chosen initial points, we can gain significant insight into the attainable social welfare in the two different settings.

\subsection{Numerical results for three families of quantum games}

We apply these methods to three families of games: $\text{NC}(C_3)$ as defined in Table~\ref{table:3players}, as well as $\text{NC}_{00}(C_5)$ and $\text{NC}_{01}(C_5)$ whose questions and winning conditions are summarised in Tables~\ref{table:NC5} and~\ref{table:NC5sym}. Our code is available at~\cite{code}.
As elsewhere, we assume the priors $\Pi$ are uniform, and the payoff functions are defined as in Eq.~\eqref{eq:payoff}, thus paramaterising the families by $v_0$ and $v_1$.
Since the overall scale of the social welfare is unimportant and determined by $v_0 + v_1$, there is effectively one free parameter of interest, $v_0/(v_0+v_1)$, and in practice we fix $v_0+v_1=2$.

The lower bounds obtained via the see-saw approach are obtained by running the algorithm five times with five different, randomly chosen, initial solutions $(\rho, \M)$, where we took $\rho$ to be an $n$-qubit state, and keeping the highest social welfare amongst the results.
In most cases, the best solution we obtained is equivalent to that found by taking, as an initial seed, measurements such that each player measures in the computational basis when receiving the type $t_i = 0$ and in the Hadamard basis when $t_i = 1$.

The results obtained from the two approaches (upper bounds from the NPA hierarchy\footnote{We used a natural generalisation of the ``$1+AB$'' intermediate level of the hierarchy to consider all monomials of order 1 and all products of zero or one measurement operator from each party~\cite{navascues_convergent_2008}. E.g., for $n=3$ one might call this level ``$1+AB+AC+BC+ABC$''.} and see-saw lower bounds) for each family of games are shown in Figure~\ref{fig:results} as a function of $v_0/(v_0+v_1)$.
For reference, we also show the best social welfare attainable by ``classical'' equilibria, i.e., those obtainable with local correlations as advice and thus inducing equilibrium correlations in $\text{Corr}(G)$, as well as the social welfare of the pseudo-telepathic solutions on the range of parameters for which they remain quantum equilibria. 
For the five-player games, the pseudo-telepathic solution is the natural generalisation of the three-player one described in Section~\ref{sec:strict} \cite{anshu_contextuality_2020}.
The pseudo-telepathic solutions are are easily verified to be quantum equilibra when $v_0$ is high enough and to induce quantum correlated equilibria for the same range of $v_0/(v_0+v_1)$. 
They give a social welfare of 1 for all values of $v_0/(v_0+v_1)$.

\begin{figure}[ht]
     \centering
     \begin{subfigure}[b]{0.51\textwidth}
         \centering
         \includegraphics[width=\textwidth]{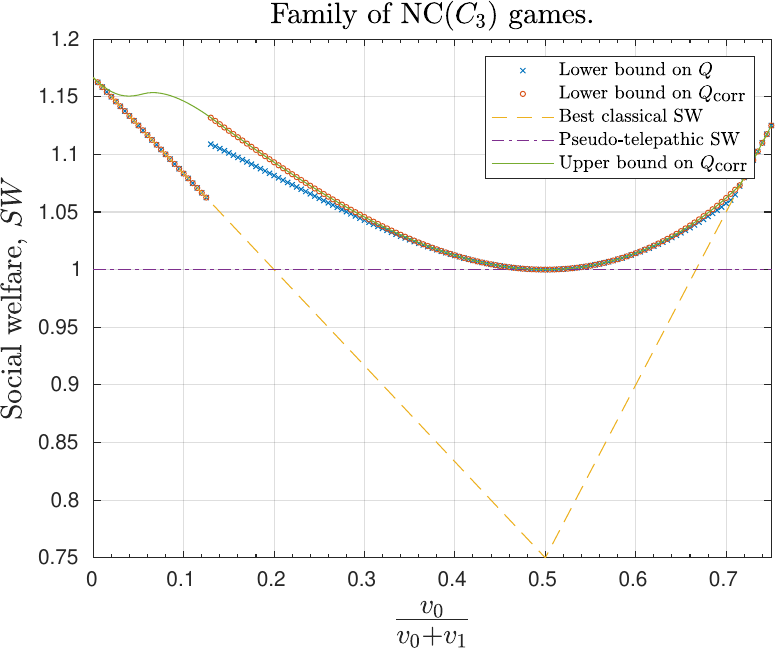}
         \label{fig:NC_C3}
     \end{subfigure}
     \hfill
     \begin{subfigure}[b]{0.49\textwidth}
         \centering
         \includegraphics[width=\textwidth]{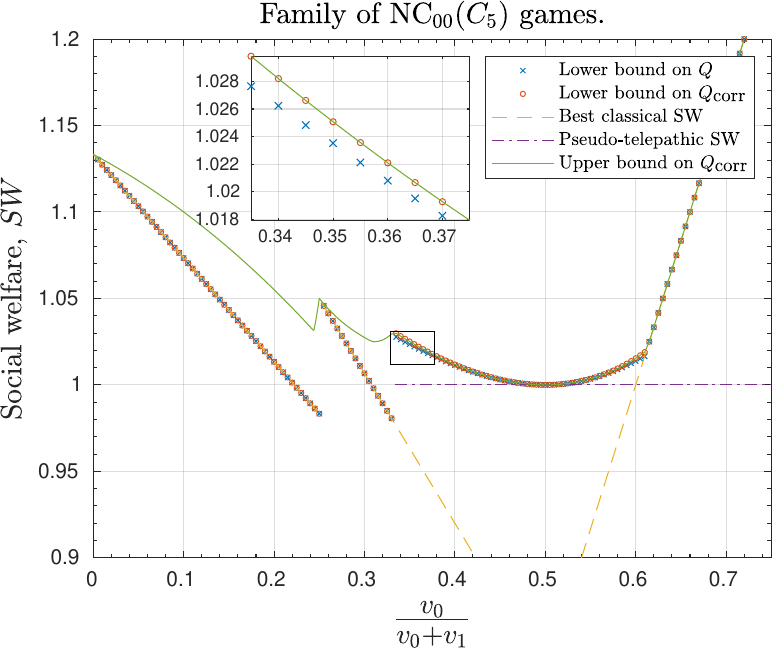}
         \label{fig:NC_C5sym}
     \end{subfigure}
     \hfill
     \begin{subfigure}[b]{0.49\textwidth}
         \centering
         \includegraphics[width=\textwidth]{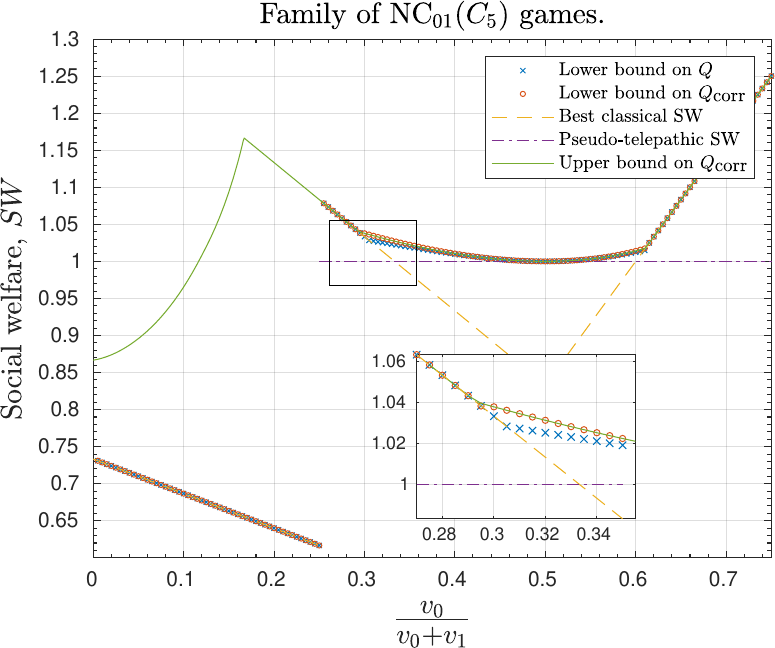}
         \label{fig:NC_C5}
     \end{subfigure}
            \caption{Numerically obtained upper and lower bounds on the social welfare attainable with quantum resources, as well as the social welfare of the best classical equilibria and of pseudo-telepathic solutions. The plots are truncated to the regions where the classical equilibria are not optimal.}
        \label{fig:results}
\end{figure}

Firstly, note that the social welfare of a given classical equilibrium solution is a linear function of $v_0$.
The changes of slope in the social welfare of optimal classical strategies correspond to a particular strategy no longer being an equilibrium for certain values of $v_0 / (v_0 + v_1)$.
For example, for the family $\text{NC}(C_3)$ the best classical equilibrium in the range $[0, 0.5]$ is $(id_T, id_A, 1)$ (and its symmetric equivalents), with $1$ representing the strategy where a player always answers $1$. 
However, this solution is no longer an equilibrium when the ratio $v_0/(v_0+v_1)$ goes above $0.5$, and the best equilibrium is then $(0,0,0)$, where $0$ represents the strategy of always answering $0$.

The first interesting feature we observe is that when the games are biased, i.e., when  $v_0\neq v_1$, the see-saw approach allows us to find explicit equilibria in both $Q_{\textrm{corr}}(G)$ and $Q(G)$ that have better social welfare than that of the pseudo-telepathic ones, showing that players can obtain a higher payoff if they accept losing the game sometimes.

In general, we can discern different behaviours for the upper and lower bounds, which we can separate into three regimes. 
In the extremes, when the ratio $v_0/(v_0 + v_1)$ is big enough, the best social welfare is reached by a classical equilibrium and is tight with the NPA upper bound, meaning that access to a quantum device of any kind cannot be used to improve the social welfare of the players. 

A second regime is observed in the region around $v_0/(v_0 + v_1) = 1/2$.
While at $v_0/(v_0 + v_1) = 1/2$ the pseudo-telepathic solution is optimal (and better than classical ones), as one moves away from this point, the see-saw approach provided equilibrium correlations in $Q_\textrm{corr}(G)$ that coincide with the upper bound from the NPA hierarchy, and with a social welfare better than that of the pseudo-telepathic solution.
We thus found, in this region, the best possible quantum correlated equilibrium solutions in terms of social welfare.
These solutions are not, however, quantum equilibria, and the optimal solutions found by the see-saw approach for $Q(G)$ were slightly lower, with a gap to those for $Q_\textrm{corr}(G)$ increasing the further one goes from $v_0/(v_0 + v_1) = 1/2$.
The maximal gaps we found for each game are $0.023$ for $G=\text{NC}(C_3)$ at $v_0/(v_0 + v_1) = 0.13$,  $0.0021$ for $G=\text{NC}(C_5)$ at $0.335$, and  $0.008$ for $G=\text{NC}_{01}(C_5)$ at $0.305$. 
These gaps support the idea that the the extra equilibria available to players in $Q_\text{corr}(G)$ indeed translate into better obtainable social welfare, although proving this rigorously -- e.g., by providing tighter upper bounds on $Q(G)$ -- remains an open problem.

The last regime appears when the ratio $v_0/(v_0+v_1)$ is small.
Then, the best solutions obtained with the see-saw approach collapse to the classical ones and a large gap is found between the social welfare of these explicit solutions and the upper bound from the NPA hierarchy. This raises the question of whether the classical solutions are optimal in this regime, and whether we might tighten the upper bound by increasing the level of the NPA hierarchy used to obtain the upper bounds. 
Furthermore, this regime illustrates the importance of considering equilibria in game theoretical settings, and stresses the difference between maximising the violation of a Bell inequality (seen as the social welfare of a game) and finding the equilibrium with maximal social welfare. Indeed, in the case of the symmetrised $\text{NC}_{01}(C_5)$ games, we observe that the upper bound on the maximal social welfare attainable in $Q_{\text{corr}}$ goes below $1$, whereas some quantum correlations could reach a value of at least $1$, the pseudo-telepathic solution being an explicit example obtaining the value 1.

\section{Conclusions and perspectives}

In this paper, we studied the utility of different types of quantum resources in non-cooperative games, both in terms of the sets of equilibrium correlations that can be obtained and in the social welfare of those equilibria.
In addition to finding, as has previously been observed~\cite{auletta_belief-invariant_2016,groisman_how_2020}, that quantum resources can indeed provide advantages beyond what can be done classically, we delineated two different types of quantum resources: those in which the players have direct access to a quantum state to measure, and those in which they have classical access to some quantum states, either via nonlocal boxes, or through the delegation of their measurements to one or several mediators.

To show a strict separation between the corresponding classes of equilibria -- $Q(G)$ and $Q_\text{corr}(G)$, respectively -- we employed techniques from self-testing. This novel use of a tool that has received significant recent attention~\cite{Supic2020selftestingof} allowed us to reduce different solutions producing the same correlation to a unique  solution whose equilibrium is to be tested, and thereby to show that $Q(G)$ is, perhaps surprisingly, strictly contained in $Q_\text{corr}(G)$.
We then extended and exploited SDP techniques to provide both upper and lower bounds on the social welfare of equilibria obtained using these different types of quantum resource.
Our numerical results show these generally provide good results and produced strong evidence that these different types of quantum resources translate into different observed phenomena in situations of conflicting interest.

Beyond the observation that quantum systems are a useful resource in ubiquitous non-cooperative games, these results raise several interesting lines of study. 
The separation between these two types of quantum resource, for example, provides a new potential angle towards understanding the limits and advantages of delegated quantum measurements, or between being provided with classical and quantum descriptions of a state.
It is likewise interesting to observe that one can, in conflicting-interest games, improve the social welfare by deviating from pseudotelepathic solutions, and investigating this further would be interesting. Indeed, in pseudotelepathic games players are guaranteed to obtain their expected payoff, whereas the best solutions we found provided a better average payoff, at the cost of requiring players to accept losing lots on some rounds; i.e., the variance in payoff is much higher.

Finally, a natural extension of this work is to consider a third intermediate setting where some players have their own quantum devices, while others have only classical access to some quantum resources.
This setting may be particularly relevant for applications in quantum cryptography, where an adversary may have different capabilities than the trusted parties.
In the scenario we considered, we found that even though players with quantum devices have more freedom and more strategies, their equilibria are generally worse than classical players with quantum correlated advice. 
However, in intermediate settings, it may be the case that owning a quantum device is an advantage.

\section*{Acknowledgements}

The authors would like to thank Ivan {\v{S}}upi{\'c}  and Laura  Man{\v{c}}inska for useful and helpful discussions about self-testing.
This work has been partially supported by the ANR project ANR-15-IDEX-02, and the Plan France 2030 projects ANR-22-CMAS-0001 (QuanTEdu-France) and ANR-22-PETQ-0007 (PEPR integrated project EPiQ).

\bibliographystyle{quantum}
\bibliography{refs}

\appendix

\renewcommand{\theequation}{A\arabic{equation}}
\setcounter{equation}{0}

\section{Self-testing the tilted quantum solution}
\label{appendix:self-testing}

In this appendix we provide a detailed proof of the self-testing of the tilted quantum solution introduced in Section~\ref{sec:strict}. The proof follows closely the approach of~\cite{PhysRevLett.124.020402}.

First, let us recall that the tilted solution is defined for an angle $\theta \in [0,\pi]$ and is composed of the state $\ket*{\Psi_{\text{tilt}(\theta)}}=CZ^{(1,2)}CZ^{(2,3)}CZ^{(3,1)} (\ket{\psi_\theta}\otimes\ket{+}\otimes \ket{+})$ where $\ket{\psi_\theta} = \cos(\frac{\theta}{2}) \ket{0} + \sin(\frac{\theta}{2}) \ket{1}$ and the observables $A_0^{(1)} = (X_1+Z_1)/ \sqrt{2}$ and $A_1^{(1)} = (X_1-Z_1)/ \sqrt{2}$ as well as $A_0^{(i)} = Z_i$ and $A_1^{(i)} = X_i$ for $i = 2, 3$. The associated POVMs will be written $\mathcal{M}_{\text{tilt(}\theta)}$ and the correlations generated on the tilted state, $C_{\text{tilt$(\theta)$}}$.
We restate Proposition~\ref{prop:selftest} for convenience before giving the proof.
\begin{proposition}
Let $(\tilde\rho, \tilde{\M})$ be a solution on some arbitrary Hilbert spaces which induces the distribution $C_{\textup{tilt$(\theta)$}}$ with $\theta \in (\frac{\pi}{4}, \, \frac{3\pi}{4})$, with $\tilde\M$ projective and $\ket*{\tilde\psi}$ a purification of $\tilde\rho$. 
For each party $i$ and $t_i\in T_i$, define also the observables $\tilde{A}^{(i)}_{t_i} = \tilde{M}^{(i)}_{0|t_i} - \tilde{M}^{(i)}_{1|t_i}$,
\begin{equation}
\tilde{X}_1 = \frac{\tilde{A}^{(1)}_0 + \tilde{A}^{(1)}_1}{\sqrt{2}}, \text{ and } \tilde{Z}_1 = \frac{\tilde{A}^{(1)}_0 - \tilde{A}^{(1)}_1}{\sqrt{2}},
\end{equation}
and for $i=2,3$
\begin{equation}
\label{eq:2tildeApp}
 \tilde{X}_i = \tilde{A}^{(i)}_1 \text{ and } \tilde{Z}_i = \tilde{A}^{(i)}_0.
\end{equation}
Then there exists a local isometry $\Phi = \Phi_1 \otimes \Phi_2 \otimes \Phi_3$ and state $\ket{\xi}$ such that
\begin{equation}
\Phi \big[\ket*{\tilde{\psi}} \big]  = \ket*{\Psi_{\textup{tilt}(\theta)}} \otimes \ket{\xi},
\end{equation}
and for $1 \le i \le 3$
\begin{equation}
\begin{aligned}
&\Phi \big[\tilde{X}_i \ket*{\tilde{\psi}} \big]  = \big(X_i  \ket*{\Psi_{\textup{tilt}(\theta)}} \big) \otimes \ket{\xi}, \\
&\Phi \big[\tilde{Z}_i \ket*{\tilde{\psi}} \big]  = \big(Z_i  \ket*{\Psi_{\textup{tilt}(\theta)}} \big) \otimes \ket{\xi},
\end{aligned}
\end{equation}
and thus the correlation $C_{\textup{tilt}(\theta)}$ self-tests the solution $(\rho_{\textup{tilt}(\theta)},\mathcal{M}_{\textup{tilt}(\theta)})$, where
\begin{equation}
	 \rho_{\textup{tilt}(\theta)}=\ketbra*{\Psi_{\textup{tilt}(\theta)}}.
\end{equation}
Furthermore, we have that
\begin{equation}
\label{eq:selftestplusApp}
\Phi \big[\tilde{X}_i\tilde{Z}_i \ket*{\tilde{\psi}} \big]  = \big(X_iZ_i  \ket*{\Psi_{\textup{tilt}(\theta)}} \big) \otimes \ket{\xi}.
\end{equation}
\end{proposition}

\begin{proof}
To prove the self-testing statement, we begin by giving a Bell inequality $\mathcal{I}_\theta \le \beta_C$ that is maximally violated by the tilted solution, adapting a method detailed in \cite{PhysRevLett.124.020402} to self-test partially entangled GHZ states. 

To tailor a Bell inequality with this property, let us consider the three stabilisers of the tilted state,
\begin{equation}
\label{eq:stab1}
S_1 = \sin(\theta)X_1Z_2Z_3 + \cos(\theta)Z_1,
\end{equation}
and, for $i = 2,3$,
\begin{equation}
\label{eq:stabi}
S_i = Z_{i-1}X_iZ_{i+1},
\end{equation}
where addition is performed modulo $3$, and the identity is left implicit on unspecified systems.
We set ourselves in a general situation where three players measure a physical tripartite state with the $\pm 1$-valued observables $\tilde{A}^{(i)}_0$ and $\tilde{A}^{(i)}_1$ in some arbitrary Hilbert spaces and define
\begin{equation}
\label{eq:1tilde}
\tilde{X}_1 = \frac{\tilde{A}^{(1)}_0 + \tilde{A}^{(1)}_1}{\sqrt{2}},\quad \tilde{Z}_1 = \frac{\tilde{A}^{(1)}_0 - \tilde{A}^{(1)}_1}{\sqrt{2}},
\end{equation}
and, for $i=2,3$,
\begin{equation}
\label{eq:2tilde-app}
 \tilde{X_i} = \tilde{A}^{(i)}_1 \text{ and } \tilde{Z_i} = \tilde{A}^{(i)}_0.
\end{equation}
In analogy to \eqref{eq:stab1} and~\eqref{eq:stabi}, we define the operators $\tilde{S_1} = \sin(\theta)\tilde{X}_1\tilde{Z}_2\tilde{Z}_3 + \cos(\theta)\tilde{Z}_1$ and, for $i= 2, 3$, $\tilde{S_i} = \tilde{Z}_{i-1}\tilde{X}_i\tilde{Z}_{i+1}$.

Let us now consider the Bell operator $\mathcal{B}_\theta$ corresponding to the (yet to be specified) Bell functional $\mathcal{I}_\theta$, with maximum quantum value $\beta_Q$.
We would like the shifted Bell operator $\beta_Q \id - \mathcal{B}_\theta$ to admit a sum of squares decomposition of the form
\begin{equation}
\label{eq:SOS}
\beta_Q \id - \mathcal{B}_\theta = \sum_{i=1}^3 \alpha_i^2(\id - \tilde{S_i})^2.
\end{equation}
By computing the squares of the $\tilde{S}_i$'s we obtain
\begin{equation}
\tilde{S_1}^2 = \id + \frac{1}{2}\biggl[\sin^2(\theta) - \cos^2(\theta)\biggl] \biggl\{\tilde{A}_0^{(1)}, \tilde{A}_1^{(1)}\biggl\},
\end{equation}
and
\begin{equation}
\tilde{S_2}^2 = \tilde{S_3}^2 = \id - \frac{1}{2} \biggl\{\tilde{A}_0^{(1)}, \tilde{A}_1^{(1)}\biggl\}, 
\end{equation}
where $\{\cdot,\cdot\}$ denotes the anticommutator and we have used the fact that each $A_{t_i}^{(i)}$ squares to the identity, which follows from the fact that the POVMs $M_{t_i}^{(i)}$ are projective. We thus find that
\begin{equation}
\begin{split}
\sum_{i=1}^3 \alpha_i^2(\id - \tilde{S_i})^2 =& \sum_{i=1}^3 \alpha_i^2  \id - 2\sum_{i=1}^3 \alpha_i^2 \tilde{S_i}\\ 
&+ \sum_{i=1}^3 \alpha_i^2  \id  
+ \frac{1}{2}\biggl[\big(\sin^2(\theta) - \cos^2(\theta)\big) \alpha_1^2 - (\alpha_2^2 + \alpha_3^2)\biggl] \biggl\{\tilde{A}_0^{(1)}, \tilde{A}_1^{(1)}\biggl\}.
\end{split}
\end{equation}
For $\theta \in (\pi/4, 3\pi/4)$ the final term can be eliminated by setting $\alpha_2^2 = \alpha_3^2 = \sqrt{2}$ and  $\alpha_1^2 = \frac{2\sqrt{2}}{\sin^2(\theta) - \cos^2(\theta)}$, to obtain
\begin{equation}
	\label{eq:SOSdecomp}
\begin{split}
\sum_{i=1}^3 \alpha_i^2(\id - \tilde{S_i})^2 &= \left(\frac{4\sqrt{2}}{\sin^2(\theta) - \cos^2(\theta)} + 4\sqrt{2}\right) \id - \left(\frac{4\sqrt{2}}{\sin^2(\theta) - \cos^2(\theta)} \tilde{S_1} + 2\sqrt{2}(\tilde{S_2}+\tilde{S_3})\right). 
\end{split}
\end{equation}
This then indeed has the desired form of the shifted Bell operator~\eqref{eq:SOS}, where can identify the Bell operator as
\begin{align}
\label{eq:bell}
\mathcal{B}_\theta =& \frac{4\sqrt{2}}{\sin^2(\theta) - \cos^2(\theta)} \tilde{S_1} + 2\sqrt{2}(\tilde{S_2}+\tilde{S_3})\\
=& \frac{4}{\sin^2(\theta) - \cos^2(\theta)} \biggl[\sin(\theta)\big(\tilde{A}_0^{(1)} + \tilde{A}_1^{(1)}\big)\tilde{A}_0^{(2)}\tilde{A}_0^{(3)} + \cos(\theta)\big(\tilde{A}_0^{(1)} - \tilde{A}_1^{(1)}\big) \biggl] \notag\\
&+  2\big(\tilde{A}_0^{(1)} - \tilde{A}_1^{(1)}\big)\tilde{A}_1^{(2)}\tilde{A}_0^{(3)} +  2\big(\tilde{A}_0^{(1)} - \tilde{A}_1^{(1)}\big)\tilde{A}_0^{(2)}\tilde{A}_1^{(3)},
\end{align}
and hence deduce the tailored Bell functional to be
\begin{align}
\mathcal{I}_{\theta} =& \frac{4}{\sin^2(\theta) - \cos^2(\theta)} \biggl[\sin(\theta)\big<\big(\tilde{A}_0^{(1)} + \tilde{A}_1^{(1)}\big)\tilde{A}_0^{(2)}\tilde{A}_0^{(3)}\big> + \cos(\theta)\big<\big(\tilde{A}_0^{(1)} - \tilde{A}_1^{(1)}\big)\big> \biggl] \notag\\
&+  2\big<\big(\tilde{A}_0^{(1)} - \tilde{A}_1^{(1)}\big)\tilde{A}_1^{(2)}\tilde{A}_0^{(3)}\big> +  2\big<\big(\tilde{A}_0^{(1)} - \tilde{A}_1^{(1)}\big)\tilde{A}_0^{(2)}\tilde{A}_1^{(3)}\big>.
\end{align}
The sum of squares decomposition of the shifted Bell operator implies that it is positive semi-definite, and therefore we can deduce from Eq.~\eqref{eq:SOSdecomp} that 
\begin{equation}
\beta_Q = 4\sqrt{2} \biggl[ \frac{1}{\sin^2(\theta)-\cos^2(\theta)} + 1\biggl]
\end{equation}
is an upper bound on the maximal quantum value $\beta_Q$ of $\mathcal{I}_\theta$.
In fact this upper bound is tight: one can easily verify that it is saturated by the tilted solution, i.e., $\bra*{\Psi_{\text{tilt}(\theta)}}\mathcal{B}_\theta \ket*{\Psi_{\text{tilt}(\theta)}} = \beta_Q$. Indeed, this follows by construction.

The classical bound $\beta_C$ of the Bell inequality can also be readily obtained by assigning a value $\pm 1$ to each observable, so that either $\tilde{A}_0^{(1)} + \tilde{A}_1^{(1)} = \pm2$ and $\tilde{A}_0^{(1)} - \tilde{A}_1^{(1)} = 0$, or the other way around. For $\theta \in (\pi/4, 3\pi/4)$ the maximal classical value is therefore
\begin{equation}
\beta_C = \begin{cases}
\frac{4\cos(\theta)}{\sin^2(\theta) - \cos^2(\theta)} + 4, & \text{for  $\theta \leq \pi/2$} \\
\frac{4\sin(\theta)}{\sin^2(\theta) - \cos^2(\theta)}, & \text{otherwise.}
\end{cases}
\end{equation}

Let us now consider a state $\ket*{\tilde\psi}$ and some observables $\tilde{A}^{(i)}_0$ and $\tilde{A}^{(i)}_1$ which achieve the maximal quantum value $\beta_Q$ of $\mathcal{I}_\theta$. To find an isometry that maps this state and measurements to the tilted quantum solution, we will first show that the $\tilde{X}_i,\tilde{Z}_i$ act like the Pauli $X$ and $Z$ on the tilted state, meaning that that those two operators square to the identity and anticommute when applied to the state $\ket*{\tilde\psi}$.
Note that by definition
\begin{equation}
\label{eq:anticom}
\{\tilde{X}_1, \tilde{Z}_1\} = 0,
\end{equation}
and that for $i=2,3$
\begin{equation}
\label{eq:squarei}
\tilde{Z}_i^2 = \tilde{X}_i^2 = \id.
\end{equation}
It thus remains to show that $\tilde{Z}_1^2 \ket*{\tilde\psi} =\tilde{X}_1^2 \ket*{\tilde\psi}= \ket*{\tilde\psi}$ and $\{\tilde{X}_i,\tilde{Z}_i\} \ket*{\tilde\psi}=0$ for $i=2,3$.

Notice firstly that because $\ket*{\tilde\psi}$ achieves the maximal quantum value $\beta_Q$, Eq.~\eqref{eq:SOS} implies that the $\tilde{S}_i$'s are stabilisers of $\ket*{\tilde\psi}$. 
Then, combined with the above results, we can show that $\tilde{Z}_1^2$ acts as the identity on $\ket*{\tilde{\psi}}$,
\begin{equation}
	\label{eq:Z1_id}
	\tilde{Z}_1^2\ket*{\tilde{\psi}} = \tilde{Z}_1^2 \tilde{X}_2^2\tilde{Z}_3^2\ket*{\tilde{\psi}} = \tilde{S}_2^2\ket*{\tilde{\psi}} = \ket*{\tilde{\psi}}.
\end{equation}
To prove the analogous result for $\tilde{X}_1$ we express it, using the stabiliser $\tilde{S}_1$ \eqref{eq:stab1} and using the fact that $\tilde{Z}^2_2 = \tilde{Z}^2_3 = \id$, as
\begin{equation}
	\label{eq:X1-S1_identity}
\tilde{X}_1 = \frac{1}{\sin(\theta)} \biggl[\tilde{S}_1 - \cos(\theta)\tilde{Z}_1 \biggl]\tilde{Z}_2\tilde{Z_3},
\end{equation}
and hence subsequently obtain
\begin{equation}
\label{eq:Xsquare}
\tilde{X}^2_1 = \frac{1}{\sin^2(\theta)} \biggl[\tilde{S_1}^2 - \cos(\theta) \big \{\tilde{S_1}, \tilde{Z}_1 \big \} + \cos^2(\theta)\tilde{Z}^2_1 \biggl].
\end{equation}
Using the the expression for $\tilde{S}_1$ \eqref{eq:stab1} and the anticommutation of $\tilde{X}_1$ and $\tilde{Z}_1$ \eqref{eq:anticom} we obtain
\begin{equation}
\begin{aligned}
\big \{\tilde{S_1}, \tilde{Z}_1 \big \} &= \sin(\theta)\{\tilde{X}_1, \tilde{Z}_1 \} \tilde{Z}_2 \tilde{Z}_3 + 2 \cos(\theta) \tilde{Z}^2_1 \\
&= 2 \cos(\theta)\tilde{Z}_1^2,
\end{aligned}
\end{equation}
and so Eq.~\eqref{eq:Xsquare} simplifies to
\begin{equation}
\tilde{X}^2_1 = \frac{1}{\sin^2(\theta)} \biggl[\tilde{S_1}^2 - \cos^2(\theta)\tilde{Z}^2_1 \biggl].
\end{equation}
Finally, because $\tilde{Z}_1^2$ acts as the identity on $\ket*{\tilde\psi}$ \eqref{eq:Z1_id} we have
\begin{equation}
\label{eq:xsquareId}
\tilde{X}^2_1 \ket*{\tilde\psi} = \ket*{\tilde\psi}.
\end{equation}

Now let us demonstrate that for $i=2,3$, the operators $\tilde{Z}_i$ and $\tilde{X}_i$ anticommute when acting on $\ket*{\tilde\psi}$. 
First, we write the products of $\tilde{S}_1$ and $\tilde{S}_i$ (Eqs.~\eqref{eq:stab1}--\eqref{eq:stabi}) as
\begin{equation}
\label{eq:SSi}
\tilde{S}_1 \tilde{S}_i \ket*{\tilde\psi} = \biggl[\sin(\theta) \tilde{X}_1 \tilde{Z}_1 \tilde{Z}_i \tilde{X}_i + \cos(\theta) \tilde{X}_i \tilde{Z}_{5-i} \biggl] \ket*{\tilde\psi},
\end{equation}
\begin{equation}
\label{eq:SiS}
\tilde{S}_i \tilde{S}_1 \ket*{\tilde\psi} = \biggl[\sin(\theta) \tilde{Z}_1 \tilde{X}_1 \tilde{X}_i \tilde{Z}_i + \cos(\theta) \tilde{X}_i \tilde{Z}_ {5-i} \biggl] \ket*{\tilde\psi},
\end{equation}
where we used the facts that $\tilde{Z}_{5-i}^2$ and $\tilde{Z}_1^2$ act as the identity on $\ket*{\tilde\psi}$ (cf.\ Eqs.~\eqref{eq:squarei}--\eqref{eq:Z1_id}). 
Then the two products of $\tilde{X}_i$ and $\tilde{Z}_i$ can be expressed as
\begin{equation}
\tilde{Z}_i \tilde{X}_i \ket*{\tilde\psi} = \frac{1}{\sin(\theta)} \biggl[\tilde{S}_1 \tilde{S}_i - \cos(\theta)\tilde{X}_i \tilde{Z}_{5-i} \biggl] \tilde{Z}_1 \tilde{X}_1 \ket*{\tilde\psi},
\end{equation}
\begin{equation}
\tilde{X}_i \tilde{Z}_i \ket*{\tilde\psi} = \frac{1}{\sin(\theta)} \biggl[\tilde{S}_i \tilde{S}_1 - \cos(\theta)\tilde{X}_i \tilde{Z}_{5-i} \biggl] \tilde{X}_1 \tilde{Z}_1 \ket*{\tilde\psi}.
\end{equation}
From the anticommutation of $\tilde{X}_1$ and $\tilde{Z}_1$ \eqref{eq:anticom} one then obtains
\begin{equation}
\begin{aligned}
\{\tilde{X}_i, \tilde{Z}_i\} \ket*{\tilde\psi} &= (\tilde{X}_i\tilde{Z}_i + \tilde{Z}_i \tilde{X}_i)\ket*{\tilde\psi} \\
& = \frac{1}{\sin(\theta)}\biggl[\tilde{S}_1\tilde{S_i} - \tilde{S}_i \tilde{S}_1 \biggl] \tilde{Z}_1\tilde{X}_1 \ket*{\tilde\psi}.
\end{aligned}
\end{equation}
Using the fact that $\{\tilde{X}_1, \tilde{Z}_1\} = 0$ we can check that $\tilde{Z}_1 \tilde{X}_1$ commutes with both $\tilde{S}_1 \tilde{S}_i$ and $\tilde{S}_i \tilde{S}_1$ and we see that
\begin{equation}
\label{eq:anticomi}
\begin{aligned}
\{\tilde{X}_i, \tilde{Z}_i\} \ket*{\tilde\psi} & = \frac{1}{\sin(\theta)} \tilde{Z}_1\tilde{X}_1  \biggl[\tilde{S}_1\tilde{S_i} - \tilde{S}_i \tilde{S}_1 \biggl]\ket*{\tilde\psi} = 0.
\end{aligned}
\end{equation}

\begin{figure}[t]
     \centering
         \centering
         \includegraphics[width=0.9\textwidth]{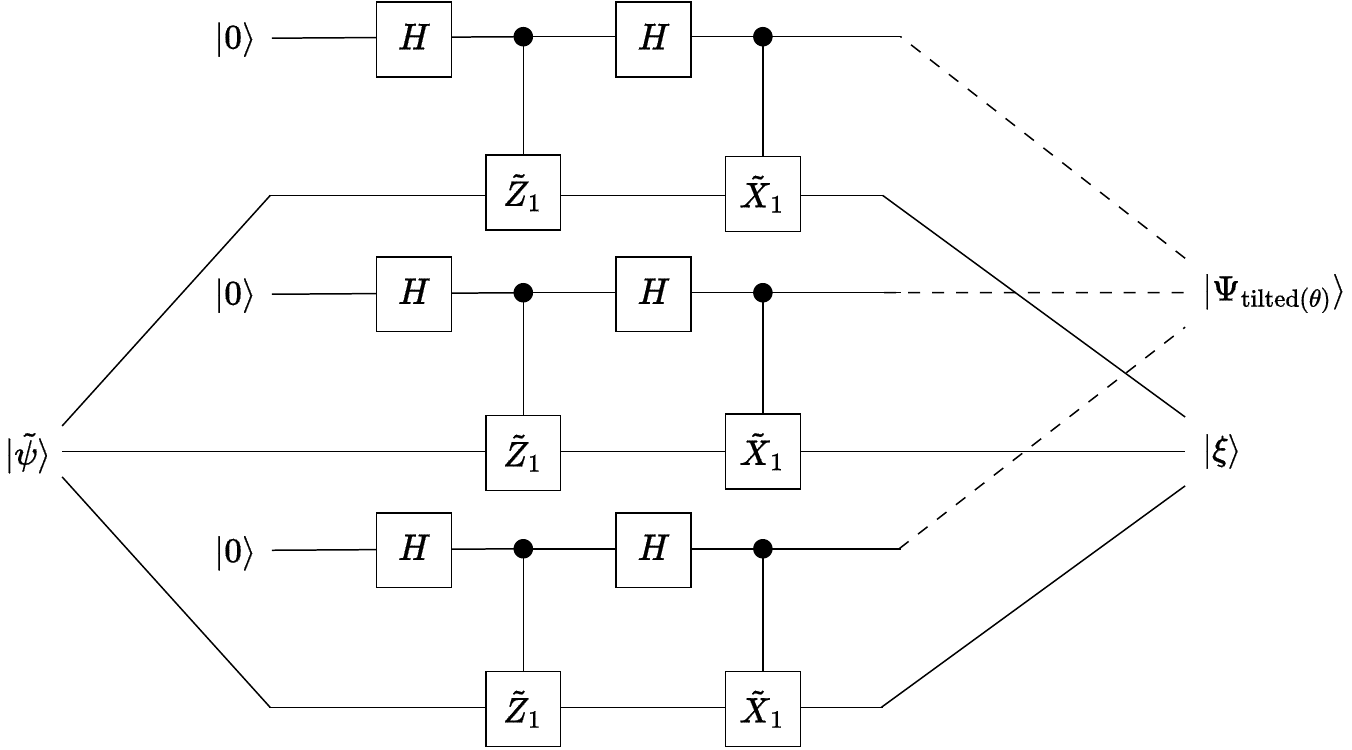}
        \caption{The SWAP isometry used to self test the tilted solution. If the Bell inequality $\mathcal{I}_\theta\le\beta_C$ is maximally violated with the quantum state $\ket{\tilde\psi}$ with quantum value $\beta_Q$, then 
        $\Phi[\ket*{\tilde\psi}]= \ket{\Psi_{\text{tilt}(\theta)}} \otimes \ket{\xi}$. 
        }
        \label{fig:swapIso}
\end{figure}

Let us now conclude the self-testing proof by considering the SWAP isometry depicted in Figure~\ref{fig:swapIso}. 
This isometry makes use of $\tilde{X}_i$ and $\tilde{Z}_i$ as unitary operators, which is not necessarily the case for $\tilde{X}_1$ and $\tilde{Z}_1$ \eqref{eq:1tilde}, as they might have some zero eigenvalues. 
This problem can be easily overcome by regularising these operators, following the method described in Appendix~A.2 of \cite{Supic2020selftestingof}.
Abusing slightly the notation, we redefine in what follows these operators to their regularised versions, $\tilde{X}_1 \leftarrow \tilde{X}_1/|\tilde{X}_1|$ and $\tilde{Z}_1 \leftarrow \tilde{Z}_1/|\tilde{Z}_1|$ for which the zero eigenvalues are changed to $1$, and the operators renormalised. One can verify that those operators act like the non-regularised ones on the state $\ket*{\tilde{\psi}}$.

The output of the SWAP isometry is then 
\begin{equation}
 \Phi[\ket*{\tilde\psi}] = \frac{1}{8} \sum_{i,j,k \in \{0, 1\}} \ket{ijk} \otimes \tilde{X}_1^{i} \tilde{Z}_1^{(i)} \tilde{X}_2^{j} \tilde{Z}_2^{(j)} \tilde{X}_3^{k} \tilde{Z}_3^{(k)} \ket*{\tilde\psi},
\end{equation}
where $\tilde{Z}^{(l)} = \id + (-1)^{l}\tilde{Z}$ and the first three qubits are those of the ancillary systems introduced in the isometry. 
The goal is to recover the tilted state $\ket*{\Psi_{\text{tilt}(\theta)}}$ in the three-qubit system on the left hand-side of the tensor product. 
Let us first consider the case where the first qubit of $\ket{ijk}$ is in the state $\ket{0}$. 
Then for any $j,k \in \{0, 1\}$, 
\begin{align}
\ket{0jk} \otimes \tilde{Z}_1^{(0)} \tilde{X}_2^{j} \tilde{Z}_2^{(j)} \tilde{X}_3^{k} \tilde{Z}_3^{(k)} \ket*{\tilde\psi} =& 
\ket{0jk} \otimes \tilde{Z}_1^{(0)}  \tilde{Z}_2^{(0)} \tilde{X}_2^{j} \tilde{Z}_3^{(0)} \tilde{X}_3^{k} \ket*{\tilde{\psi}}  \notag\\
=& \, (-1)^{j k} \ket{0jk} \otimes
\tilde{Z}_1^{(0)}  \tilde{Z}_2^{(0)} \tilde{Z}_3^{(0)}(\tilde{Z}_1\tilde{Z}_2)^{k} (\tilde{Z}_1\tilde{Z}_3)^{j} \ket*{\tilde\psi} \notag\\
=& \, (-1)^{j k} \ket{0jk} \otimes
\tilde{Z}_1^{(0)}  \tilde{Z}_2^{(0)} \tilde{Z}_3^{(0)} \ket*{\tilde\psi}. 
\end{align}
The first line is obtained by the anticommutation relation \eqref{eq:anticomi} which implies that $\tilde{X}_m^l \tilde{Z}_m^{(l)}=\tilde{Z}_m^{(0)}\tilde{X}_m^l$ for $m=2,3$, the second by the same anticommutation relation and by application of $\tilde{S}_3^k$ and $\tilde{S}_2^j$, and finally the last line uses the fact that $\tilde{Z}^{(0)}_i \tilde{Z}_i \ket*{\tilde\psi} = \tilde{Z}^{(0)}_i \ket*{\tilde\psi}$. 

When the first qubit is in the state $\ket{1}$, we obtain, using an analogous argument, that for any $j,k \in \{0, 1\}$
\begin{align}
\ket{1jk} \otimes \tilde{X_1}\tilde{Z}_1^{(1)} \tilde{X}_2^{j} \tilde{Z}_2^{(j)} \tilde{X}_3^{k} \tilde{Z}_3^{(k)} \ket*{\tilde\psi} =& 
\ket{1jk} \otimes \tilde{Z}_1^{(0)} \tilde{X}_1 \tilde{Z}_2^{(0)} \tilde{X}_2^{j} \tilde{Z}_3^{(0)} \tilde{X}_3^{k} \ket*{\tilde\psi} \notag \\
=& (-1)^{jk}\ket{1jk} \otimes \tilde{Z}_1^{(0)} \tilde{X}_1 \tilde{Z}_2^{(0)}  \tilde{Z}_3^{(0)} (\tilde{Z}_1\tilde{Z}_2)^k (\tilde{Z}_1\tilde{Z}_3)^j \ket*{\tilde\psi} \notag \\
=& (-1)^{j + k + j k}\ket{1jk} \otimes \tilde{Z}_1^{(0)} \tilde{X}_1 \tilde{Z}_2^{(0)}  \tilde{Z}_3^{(0)} \ket*{\tilde\psi}.
\end{align}
Applying the identity~\eqref{eq:X1-S1_identity}, one then obtains
\begin{align}
\ket{1jk} \otimes \tilde{X_1}\tilde{Z}_1^{(1)} \tilde{X}_2^{j} \tilde{Z}_2^{(j)} \tilde{X}_3^{k} \tilde{Z}_3^{(k)} \ket*{\tilde\psi} =& \frac{(-1)^{j + k + j k}}{\sin(\theta)}\ket{1jk} \otimes \tilde{Z}_1^{(0)} \tilde{Z}_2^{(0)}  \tilde{Z}_3^{(0)} \biggl[\tilde{S}_1 - \cos(\theta)\tilde{Z}_1\biggl]\tilde{Z}_2\tilde{Z}_3 \ket*{\tilde\psi} \notag \\
=& \frac{(-1)^{j + k + j k}}{\sin(\theta)}\ket{1jk} \otimes \tilde{Z}_1^{(0)} \tilde{Z}_2^{(0)}  \tilde{Z}_3^{(0)} \biggl[\tilde{S}_1 - \cos(\theta)\tilde{Z}_1\biggl] \ket*{\tilde\psi}  \notag\\
=& \frac{(-1)^{j + k + j k}}{\sin(\theta)} [1 - \cos(\theta)]\ket{1jk} \otimes \tilde{Z}_1^{(0)} \tilde{Z}_2^{(0)}  \tilde{Z}_3^{(0)}\ket*{\tilde\psi}  \notag\\
=& (-1)^{j + k + j k}\frac{\sin(\theta/2)}{\cos(\theta/2)} \ket{1jk} \otimes \tilde{Z}_1^{(0)} \tilde{Z}_2^{(0)}  \tilde{Z}_3^{(0)}\ket*{\tilde\psi},
\end{align}
where the second line is obtained by the commutation of $\tilde{Z}_2$ and $\tilde{Z}_3$ with both $\tilde{S}_1$ and $\tilde{Z}_1$, and the third line using the fact that $\tilde{S}_1$ stabilises $\ket*{\tilde{\psi}}$. 
Finally, the last line is obtained using the trigonometric identities $\sin(\theta) = 2\sin(\frac{\theta}{2})\cos(\frac{\theta}{2})$ and $\cos(\theta) = 1 - 2\sin^2(\frac{\theta}{2})$ . 
With these two expressions the output of the isometry can be written
\begin{align}
\Phi[\ket*{\tilde\psi}] =& \frac{1}{8}\left[ \sum_{j,k \in \{0, 1\}} (-1)^{j k} \ket{0jk} + \frac{\sin(\theta / 2)}{\cos(\theta / 2)} \sum_{j,k \in \{0, 1\}} (-1)^{j+k+j k} \ket{1jk} \right]\otimes \tilde{Z}_1^{(0)} \tilde{Z}_2^{(0)}  \tilde{Z}_3^{(0)}\ket*{\tilde\psi}\notag \\
=& \frac{1}{2}\left[ \cos(\tfrac{\theta}{2}) \sum_{j,k \in \{0, 1\}} (-1)^{j k} \ket{0jk} + \sin(\tfrac{\theta}{2}) \sum_{j,k \in \{0, 1\}} (-1)^{j+k+j k} \ket{1jk} \right] \notag\\
& \quad\otimes \frac{1}{4\cos(\tfrac{\theta}{2})} \tilde{Z}_1^{(0)} \tilde{Z}_2^{(0)}  \tilde{Z}_3^{(0)}\ket*{\tilde\psi} \notag\\
=& \ket*{\Psi_{\text{tilt}(\theta)}} \otimes \ket{\xi},
\end{align}
where $\ket{\xi} = \frac{1}{4\cos(\frac{\theta}{2})} \tilde{Z}_1^{(0)} \tilde{Z}_2^{(0)}  \tilde{Z}_3^{(0)} \ket*{\tilde\psi}$.

We can now prove the self-testing of the observables, i.e., that for all $m=1,2,3$
\begin{equation}
\label{eq:selfTestX}
\Phi[\tilde{X}_m \ket*{\tilde\psi}] = X_m  \ket*{\Psi_{\text{tilt}(\theta)}} \otimes \ket{\xi},
\end{equation}
and
\begin{equation}
\label{eq:selfTestZ}
\Phi[\tilde{Z}_m \ket*{\tilde\psi} ]= Z_m  \ket*{\Psi_{\text{tilt}(\theta)}} \otimes \ket{\xi}.
\end{equation}
Below we show that this is the case for $m=1$ but analogous arguments apply for $m=2,3$.

Reasoning directly, we have
\begin{equation}
\begin{aligned}
\Phi[\tilde{Z}_1 \ket*{\tilde\psi}] =& \sum_{i,j,k \in \{0, 1\}} \ket{ijk} \otimes \tilde{X^i_1}\tilde{Z}_1^{(i)} \tilde{X}_2^{j} \tilde{Z}_2^{(j)} \tilde{X}_3^{k} \tilde{Z}_3^{(k)} \tilde{Z}_1 \ket*{\tilde\psi}\\
=& \sum_{i,j,k \in \{0, 1\}} (-1)^i\ket{ijk} \otimes \tilde{X^i_1}\tilde{Z}_1^{(i)} \tilde{X}_2^{j} \tilde{Z}_2^{(j)} \tilde{X}_3^{k} \tilde{Z}_3^{(k)} \ket*{\tilde\psi} \\
=& \sum_{i,j,k \in \{0, 1\}} Z_1 \ket{ijk} \otimes \tilde{X^i_1}\tilde{Z}_1^{(i)} \tilde{X}_2^{j} \tilde{Z}_2^{(j)} \tilde{X}_3^{k} \tilde{Z}_3^{(k)} \ket*{\tilde\psi} \\
=& Z_1 \ket*{\Psi_{\text{tilt}(\theta)}} \otimes \ket{\xi},
\end{aligned}
\end{equation}
where the second line is obtained using the fact that $\tilde{Z}_1^{(i)} \tilde{Z}_1 \ket*{\tilde\psi} = (-1)^i \tilde{Z}_1^{(i)} \ket*{\tilde\psi}$.
Likewise, we have
\begin{equation}
\begin{aligned}
\Phi[\tilde{X}_1 \ket*{\tilde\psi}] =& \sum_{i,j,k \in \{0, 1\}} \ket{ijk} \otimes \tilde{X^i_1}\tilde{Z}_1^{(i)} \tilde{X}_2^{j} \tilde{Z}_2^{(j)} \tilde{X}_3^{k} \tilde{Z}_3^{(k)} \tilde{X}_1 \ket*{\tilde\psi}\\
=& \sum_{i,j,k \in \{0, 1\}} \ket{ijk} \tilde{X}_2^{j} \tilde{Z}^{(j)}_2  \tilde{X}_3^{k} \tilde{Z}_3^{(k)} \tilde{X}_1^i \tilde{X}_1 (\id + (-1)^{1-i} \tilde{Z}_1)  \ket*{\tilde\psi} \\
=& \sum_{i,j,k \in \{0, 1\}} \ket{ijk} \tilde{X}_2^{j} \tilde{Z}^{(j)}_2  \tilde{X}_3^{k} \tilde{Z}_3^{(k)} \tilde{X}_1^i \tilde{X}_1 (\id + (-1)^{1-i} \tilde{Z}_2 \tilde{X}_3)  \ket*{\tilde\psi} \\
=& \sum_{i,j,k \in \{0, 1\}} \ket{ijk} \tilde{X}_2^{j} \tilde{Z}^{(j)}_2  \tilde{X}_3^{k} \tilde{Z}_3^{(k)} \tilde{X}_1^{1-i} (\id + (-1)^{1-i} \tilde{Z}_2 \tilde{X}_3)  \ket*{\tilde\psi} \\
=& \sum_{i,j,k \in \{0, 1\}} \ket{ijk} \tilde{X}_2^{j} \tilde{Z}^{(j)}_2  \tilde{X}_3^{k} \tilde{Z}_3^{(k)} \tilde{X}_1^{1-i} (\id + (-1)^{1-i} \tilde{Z}_1)  \ket*{\tilde\psi} \\
=& \sum_{i,j,k \in \{0, 1\}} \ket{ijk} \tilde{X}_1^{1-i} \tilde{Z}^{(1-i)}_1 \tilde{X}_2^{j} \tilde{Z}^{(j)}_2  \tilde{X}_3^{k} \tilde{Z}_3^{(k)} \ket*{\tilde\psi} \\
=& \sum_{i,j,k \in \{0, 1\}} \ket{(1-i)jk} \tilde{X}_1^{i} \tilde{Z}^{(i)}_1 \tilde{X}_2^{j} \tilde{Z}^{(j)}_2  \tilde{X}_3^{k} \tilde{Z}_3^{(k)} \ket*{\tilde\psi} \\
=& \sum_{i,j,k \in \{0, 1\}} X_1\ket{ijk} \tilde{X}_1^{i} \tilde{Z}^{(i)}_1 \tilde{X}_2^{j} \tilde{Z}^{(j)}_2  \tilde{X}_3^{k} \tilde{Z}_3^{(k)} \ket*{\tilde\psi} \\
=& X_1 \ket*{\Psi_{\text{tilt}(\theta)}} \otimes \ket{\xi},
\end{aligned}
\end{equation}
where the second line uses the fact that $\tilde{X}_1$ and $\tilde{Z}_1$ anticommute \eqref{eq:anticom}. 
The third line is obtained by application of the stabiliser $\tilde{S}_3$ (\ref{eq:stabi}) (note that for $i=3$, one should use the stabiliser $\tilde{S}_2$) so that we can apply the fact that $\tilde{X}^2_1$ acts as the identity on $\ket*{\tilde\psi}$.

To conclude we show that, for $m=1,2,3$,
\begin{equation}
\Phi[\tilde{X}_m \tilde{Z}_m \ket*{\tilde\psi}] = X_m Z_m \ket*{\Psi_{\text{tilt}(\theta)}} \otimes \ket{\xi}.
\end{equation}
In fact, one can show that a result of this kind holds in general when self-testing observables~\cite{Laura-Private-Comm}, but here we simply show it explicitly for the case at hand. The proof is straightforward from the applications of the two equations Eq.~\eqref{eq:selfTestX} and Eq.~\eqref{eq:selfTestZ}. For the case of $m=1$ (the same arguments applies for $m=2,3$) we have
\begin{equation}
\begin{aligned}
\Phi[\tilde{X}_1 \tilde{Z}_1 \ket*{\tilde\psi}] =& \sum_{i,j,k \in \{0, 1\}} \ket{ijk} \otimes \tilde{X^i_1}\tilde{Z}_1^{(i)} \tilde{X}_2^{j} \tilde{Z}_2^{(j)} \tilde{X}_3^{k} \tilde{Z}_3^{(k)} \tilde{X}_1 \tilde{Z}_1 \ket*{\tilde\psi}\\
=& \sum_{i,j,k \in \{0, 1\}} (-1)^i \ket{ijk} \otimes \tilde{X^i_1}\tilde{Z}_1^{(i)} \tilde{X}_2^{j} \tilde{Z}_2^{(j)} \tilde{X}_3^{k} \tilde{Z}_3^{(k)} \tilde{X}_1  \ket*{\tilde\psi}\\
=& \sum_{i,j,k \in \{0, 1\}} (-1)^i \ket{(1-i)jk} \otimes \tilde{X^i_1}\tilde{Z}_1^{(i)} \tilde{X}_2^{j} \tilde{Z}_2^{(j)} \tilde{X}_3^{k} \tilde{Z}_3^{(k)} \ket*{\tilde\psi}\\
=& \sum_{i,j,k \in \{0, 1\}} X_1 Z_1 \ket{ijk} \otimes \tilde{X^i_1}\tilde{Z}_1^{(i)} \tilde{X}_2^{j} \tilde{Z}_2^{(j)} \tilde{X}_3^{k} \tilde{Z}_3^{(k)} \ket*{\tilde\psi}\\
=& X_1 Z_1 \ket*{\Psi_{\text{tilt}(\theta)}} \otimes \ket{\xi}.
\end{aligned}
\end{equation}
\end{proof}

\end{document}